\newtheorem{mydef}{Definition}
\newtheorem{mythm}{Theorem}
\newtheorem{mylma}{Lemma}
\begin{document}

\title{Nonnegative Decomposition of Multivariate Information}

\author{Paul L. Williams}
\email{plw@indiana.edu}
\affiliation{Cognitive Science Program and}
\author{Randall D. Beer}
\affiliation{Cognitive Science Program and}
 \affiliation{School of Informatics and Computing\\ Indiana University, Bloomington, Indiana 47406 USA}

\date{\today}

\begin{abstract}
Of the various attempts to generalize information theory to multiple variables, the most widely utilized, interaction information, suffers from the problem that it is sometimes negative. Here we reconsider from first principles the general structure of the information that a set of sources provides about a given variable. We begin with a new definition of redundancy as the minimum information that any source provides about each possible outcome of the variable, averaged over all possible outcomes. We then show how this measure of redundancy induces a lattice over sets of sources that clarifies the general structure of multivariate information.  Finally, we use this redundancy lattice to propose a definition of partial information atoms that exhaustively decompose the Shannon information in a multivariate system in terms of the redundancy between synergies of subsets of the sources. Unlike interaction information, the atoms of our partial information decomposition are never negative and always support a clear interpretation as informational quantities. Our analysis also demonstrates how the negativity of interaction information can be explained by its confounding of redundancy and synergy.
\end{abstract}

\pacs{89.70.-a, 87.19.lo, 87.10.Vg, 89.75.-k}
\keywords{information theory, interaction information, redundancy, synergy, multivariate interaction}

\maketitle
 
 \section{Introduction}
 
From its roots in Shannon's seminal work on reliability and coding in communication systems, information theory has grown into a ubiquitous general tool for the analysis of complex systems, with application in neuroscience, genetics, physics, machine learning, and many other areas. Somewhat surprisingly, the vast majority of work in information theory concerns only the simplest possible case: the information that a single variable provides about another.  This is quantified by Shannon's mutual information, which is by far the most widely used concept from information theory \cite{Shannon1949}.  The second most popular concept, conditional mutual information, considers interactions between multiple variables in only the most rudimentary sense: it seeks to eliminate the influence of other variables in order to isolate the dependency between two variables of interest.  In contrast, many of the most interesting and challenging scientific questions, such as many-body problems in physics \cite{Pines1997}, $n$-person games in game theory \cite{Luce1989}, and population coding in neuroscience \cite{Dayan2001, Rieke1999}, involve understanding the structure of interactions between three or more variables.

The two main attempts to generalize information theory to multivariate interactions are the \emph{total correlation} proposed by Watanabe \cite{Watanabe1960} (also known as the multivariate constraint \cite{Garner1962}, multiinformation \cite{Studeny1998}, and integration \cite{Tononi1994}) and the \emph{interaction information} of McGill \cite{McGill1954} (also known as multiple mutual information \cite{Han1980}, co-information \cite{Bell2003}, and synergy \cite{Gawne1993}).  The total correlation, as its name suggests, measures the total amount of dependency between a set of variables as a single monolithic quantity.  Thus, the total correlation does not provide any insight into how dependencies are distributed amongst the variables, i.e., it says nothing about the \emph{structure} of multivariate information.

In contrast, interaction information was proposed as a measure of the amount of information bound up in a set of variables beyond that which is present in any subset of those variables.  Thus, entropy and mutual information correspond to first- and second-order interaction information, respectively, and together with its third-, fourth-, and higher-order variants, interaction information provides a way of characterizing the structure of multivariate information.  Interaction information is also the natural generalization of mutual information when Shannon entropy is viewed as a signed measure on information diagrams \cite{Bell2003,Yeung2008,Cover2006}.  However, the wider use of interaction information has largely been hampered by the ``odd'' \cite{Bell2003} and ``unfortunate'' \cite{Cover2006} property that, for three or more variables, the interaction information can be negative (see also \cite{Han1980,Yeung2008,Takano1974, Tsujishita1995, Zhang1998}). For information as it is commonly understood, it is entirely unclear what it means for one variable to provide ``negative information'' about another.  Moreover, as we demonstrate below, the confusing property of negativity is actually symptomatic of deeper problems regarding the interpretation of interaction information for larger systems.  As a result, there remains no generally accepted extension of information theory for characterizing the structure of multivariate interactions.  

Here we formulate a new perspective on the structure of multivariate information.  Beginning from first principles, we consider the general structure of the information that a set of sources provide about a given variable.  We propose a new definition of redundancy as the minimum information that any source provides about each outcome of the variable, averaged over all possible outcomes.  Then we show how this definition can be used to exhaustively decompose the Shannon information in a multivariate system into partial information atoms consisting of redundancies between synergies of subsets of the sources.  We also demonstrate that partial information forms a lattice that clarifies the general structure of multivariate information.  Unlike interaction information, the atoms of our partial information decomposition are never negative and always support a clear interpretation as informational quantities.  Finally, our analysis also demonstrates how the negativity of interaction information can be explained by its confounding of redundant and synergistic interactions.  

\section{The Structure of Multivariate Information}

Suppose we are given a random variable $S$ and a random vector ${\bf R} = \{R_1, R_2, \ldots, R_{n-1}\}$.  Then our goal is to decompose the information that ${\bf R}$ provides about $S$ in terms of the partial information contributed either individually or jointly by various subsets of ${\bf R}$.  For example, in a neuroscience context, $S$ may correspond to a stimulus that takes on different values and ${\bf R}$ to the evoked responses of different neurons.  In this case, we would like to quantify the information that the joint neural response provides about the stimulus, and to distinguish between information due to responses of individual neurons versus combinations of them \cite{Rieke1999,Gawne1993}.

\begin{figure}[t!]
  \centering
    \includegraphics{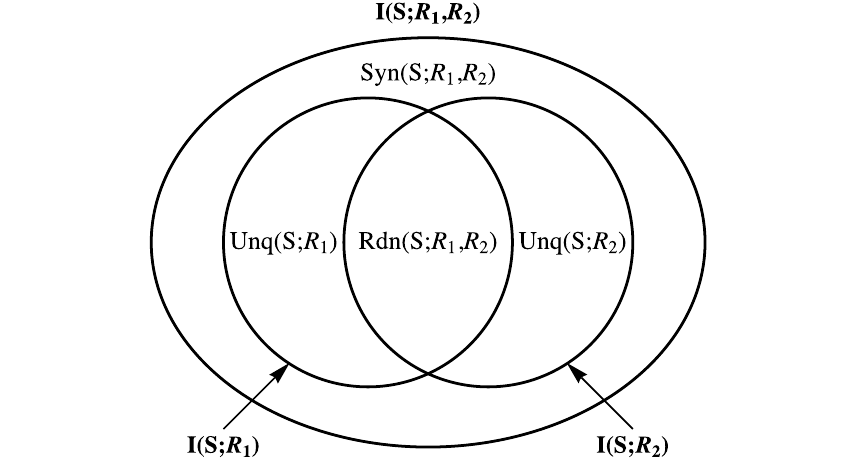}
\caption{Structure of multivariate information for 3 variables.  Labelled regions correspond to unique information (\emph{Unq}), redundancy (\emph{Rdn}), and synergy (\emph{Syn}).}
\label{generic_syn_rdn}
\end{figure}

Consider the simplest case of a system with three variables.  How much total information does ${\bf R} = \{R_1, R_2\}$ provide about $S$? How do $R_1$ and $R_2$ contribute to the total information?  The answer to the first question is given by the mutual information $I(S; R_1, R_2)$, while for the latter we can identify three distinct possibilities.  First, $R_1$  may provide information that $R_2$ does not, or vice versa (\emph{unique information}).  For example, if $R_1$ is a copy of $S$ and $R_2$ is a degenerate random variable, then the total information from ${\bf R}$  reduces to the unique information from  $R_1$. Second, $R_1$ and $R_2$ may provide the same or overlapping information (\emph{redundancy}).  For example, if $R_1$ and $R_2$ are both copies of $S$ then they redundantly provide complete information.  Third, the combination of $R_1$ and $R_2$ may provide information that is not available from either alone (\emph{synergy}).  A well-known example for binary variables is the exclusive-OR function $S = R_1 \oplus R_2$, in which case $R_1$ and $R_2$ individually provide no information but together provide complete information.  Thus, intuitively, the total information from ${\bf R}$ decomposes into unique information from $R_1$ and $R_2$, redundant information shared by $R_1$ and $R_2$, and synergistic information contributed jointly by $R_1$ and $R_2$ (FIG. 1).  

In sum, for three variables we can identify unique information, redundancy, and synergy as the basic atoms of multivariate information.  In fact, as later developments will clarify, unique information is best thought of as a degenerate form of redundancy or synergy, so that redundancy and synergy alone constitute the basic building blocks of multivariate information.  In particular, we will find that various combinations of redundancy and synergy, which may at first sound paradoxical, play a fundamental role in structuring multivariate information in higher dimensions.  Next we proceed to formalize these ideas, beginning with the problem of defining a measure of redundancy.  

\section{Measuring Redundancy}

Let ${\bf A}_1, {\bf A}_2, \ldots, {\bf A}_k$ be nonempty and potentially overlapping subsets of ${\bf R}$, which we call \emph{sources}.  How can we quantify the redundant information that all sources provide about $S$?

Of course, the information supplied by each ${\bf A}_i$  is given simply by $I(S; {\bf A}_i)$, the mutual information between $S$ and ${\bf A}_i$.  However, it is crucial to note that mutual information is actually a measure of \emph{average} or \emph{expected} information, where the expected value is taken over outcomes of the random variables.  Thus, for instance, two sources might provide the same average amount of information, while also providing information about different outcomes of $S$.  Stated formally, the information provided by a source ${\bf A}$ can be written as
\begin{equation}
I(S; {\bf A}) = \sum_{s} p(s) I(S=s; {\bf A})
\end{equation} 
where the \emph{specific information} $I(S=s; {\bf A})$ quantifies the information associated with a particular outcome $s$ of $S$.  Various definitions of specific information have been proposed to quantify different relationships between $S$ and ${\bf A}$ (see Appendix A), but for our purposes the most useful is
\begin{equation}\label{specInfo}
I(S=s; {\bf A}) = \sum_{{\bf a}} p({\bf a}|s) \bigg[ \log \frac{1}{p(s)} - \log \frac{1}{p(s|{\bf a})} \bigg].
\end{equation}
The term $\frac{1}{p(s)}$ is called the surprise of $s$, so $I(S=s; {\bf A})$ is the average reduction in surprise of $s$ given knowledge of ${\bf A}$. In other words, $I(S=s; {\bf A})$ quantifies the information that ${\bf A}$ provides about each particular outcome $s \in S$, while $I(S; {\bf A})$ is the expected value of this quantity over all outcomes of $S$.

Given these considerations, a natural measure of redundancy is the expected value of the minimum information that any source provides about each outcome of $S$, or
\begin{equation}\label{IminDef}
I_{\min}(S; \{{\bf A}_1, {\bf A}_2, \ldots, {\bf A}_k\}) = \sum_s p(s) \min_{{\bf A}_i} I(S = s; {\bf A}_i).
\end{equation}
$I_{\min}$ captures the idea that redundancy is the information common to all sources (the minimum information that any source provides), while taking into account that sources may provide information about different outcomes of $S$.  Note that, like the mutual information, $I_{\min}$ is also an expected value of specific information terms.

 $I_{\min}$ also has several important properties that further support its interpretation as a measure of redundancy.  First, $I_{\min}$ is nonnegative, a property that follows directly from the nonnegativity of specific information (see Appendix D).  Second,  $I_{\min}$ is less than or equal to $I(S; {\bf A}_i)$ for all ${\bf A}_i$'s, with equality if and only if $I(S=s; {\bf A}_i) = I(S=s; {\bf A}_j)$ for all $i$ and $j$ and all $s \in S$.  Thus, as one would hope, the amount of redundant information is bounded by the information provided by each source, with equality if and only if all sources provide the exact same information about $S$.  Finally, and closely related to the previous property, for a given source ${\bf A}$ the amount of information redundant with ${\bf A}$ is maximal for $I_{\min}(S; \{{\bf A}\}) = I(S;{\bf A})$.  In other words, redundant information is maximized by the ``self-redundancy,'' analogous to the property that mutual information is maximized by the self-information $I(S; S) = H(S)$.

What are the distinct ways in which collections of sources might contribute redundant information? Formally, answering this question means identifying the domain of $I_{\min}$.  Thus far, we have assumed that the natural domain is the collection of all possible sets of sources, but in fact this can be greatly simplified.  To illustrate, consider two sources, ${\bf A}$ and ${\bf B}$, with ${\bf A}$ a subset of ${\bf B}$.  Clearly, any information provided by ${\bf A}$ is also provided by ${\bf B}$, so the redundancy between ${\bf A}$ and ${\bf B}$ reduces to the self-redundancy for ${\bf A}$,
\begin{equation*}
I_{\min}(S; \{{\bf A}, {\bf B}\}) = I_{\min}(S; \{{\bf A}\}) = I(S; {\bf A}).
\end{equation*}
Furthermore, for any source ${\bf C}$, it follows that $I_{\min}(S; \{{\bf A}, {\bf B}, {\bf C}\}) = I_{\min}(S; \{{\bf A}, {\bf C}\})$.  Extending this idea, for any collection of sources where some are supersets of others, the redundancy for that collection is equivalent to the redundancy with all supersets removed.  Thus, the domain for $I_{\min}$ can be reduced to the collection of all sets of sources such that no source is a superset of any other.  Formally, this set can be written as 
\begin{equation}
\mathcal{A}({\bf R}) = \{\alpha \in \mathcal P_{1}(\mathcal P_{1}({\bf R})) : \forall {\bf A}_i, {\bf A}_j \in \alpha, {\bf A}_i \not\subset {\bf A}_j\}, 
\end{equation}
where $\mathcal P_1( {\bf R} ) = \mathcal P( {\bf R} ) \setminus \{\emptyset\}$ is the set of all nonempty subsets of ${\bf R}$.  Henceforth, we will denote elements of $\mathcal{A}({\bf R})$, corresponding to collections of sources, with bracketed expressions containing only the indices for each source.  For instance, $\{\{R_1, R_2\}\}$ will be $\{12\}$, $\{\{R_1\}$, $\{R_2, R_3\}\}$ will be $\{1\}\{23\}$, and so forth.  

The possibilities for redundancy are also naturally structured, which is shown by extending the same line of reasoning to define an ordering $\preccurlyeq$ on the elements of $\mathcal{A}({\bf R})$.  Consider two collections of sources, $\alpha, \beta \in \mathcal{A}({\bf R})$, where for each source ${\bf B} \in \beta$ there exists a source ${\bf A} \in \alpha$ with ${\bf A}$ a subset of ${\bf B}$.  This means that for each source ${\bf B} \in \beta$ there is a source ${\bf A} \in \alpha$ such that ${\bf A}$ provides no more information than ${\bf B}$.  The redundant information shared by all ${\bf B} \in \beta$ must therefore at least include any redundant information shared by all ${\bf A} \in \alpha$.  Thus, we can define a partial order over the elements of $\mathcal{A}({\bf R})$ such that one element (collection of sources) is considered to precede another if and only if the latter provides any redundant information that the former provides.  The ordering relation  $\preccurlyeq$ is formally defined as
\begin{equation}
\forall \alpha, \beta \in \mathcal{A}({\bf R}), (\alpha \preccurlyeq \beta \Leftrightarrow \forall {\bf B} \in \beta, \exists {\bf A} \in \alpha, {\bf A} \subseteq {\bf B}).
\end{equation}
Applying this ordering to the elements of $\mathcal{A}({\bf R})$ produces a \emph{redundancy lattice}, in which a higher element provides at least as much redundant information as a lower one (FIG. 2; see Appendix C).

The redundancy lattice provides a wealth of insight into the structure of redundancy.  For instance, from the redundancy lattice it is possible to read off some of the properties of $I_{\min}$ noted earlier.  The property that redundancy for a source is maximized by the self-redundancy can be seen from the fact that any node corresponding to an individual source appears higher in the redundancy lattice than any other node involving that source.  For example, in FIG. 2B, the node labeled \{12\}, corresponding to the self-redundancy for the source $\{R_1, R_2\}$, occurs higher than nodes labeled \{12\}\{13\}, \{12\}\{13\}\{23\}, and \{3\}\{12\}.  Another property of $I_{\min}$ that can be seen from these diagrams relates to the top and bottom elements of the lattice.  The top element corresponds to the self-redundancy for ${\bf R}$, reflecting the fact that $I_{\min}$ is bounded from above by the total amount of information provided by ${\bf R}$.  At the other end of the spectrum, the bottom element corresponds to the redundant information that each individual element of ${\bf R}$ provides, with all other possibilities for redundancy falling between these two extremes.

\section{Partial Information Decomposition}

The redundant information associated with each node of the redundancy lattice includes, but is not limited to, the redundant information provided by all nodes lower in the lattice.  Thus, moving from node to node up the lattice, $I_{\min}$ can be thought of as a kind of ``cumulative information function,'' effectively integrating the information provided by increasingly inclusive collections of sources.  Next, we derive an inverse of $I_{\min}$ called the partial information function (PI-function).  Whereas  $I_{\min}$ quantifies cumulative information, the PI-function measures the partial information contributed uniquely by each particular collection of sources. This partial information will form the atoms into which we decompose the total information that ${\bf R}$ provides about $S$.

\begin{figure}[t!]
  \centering
    \includegraphics{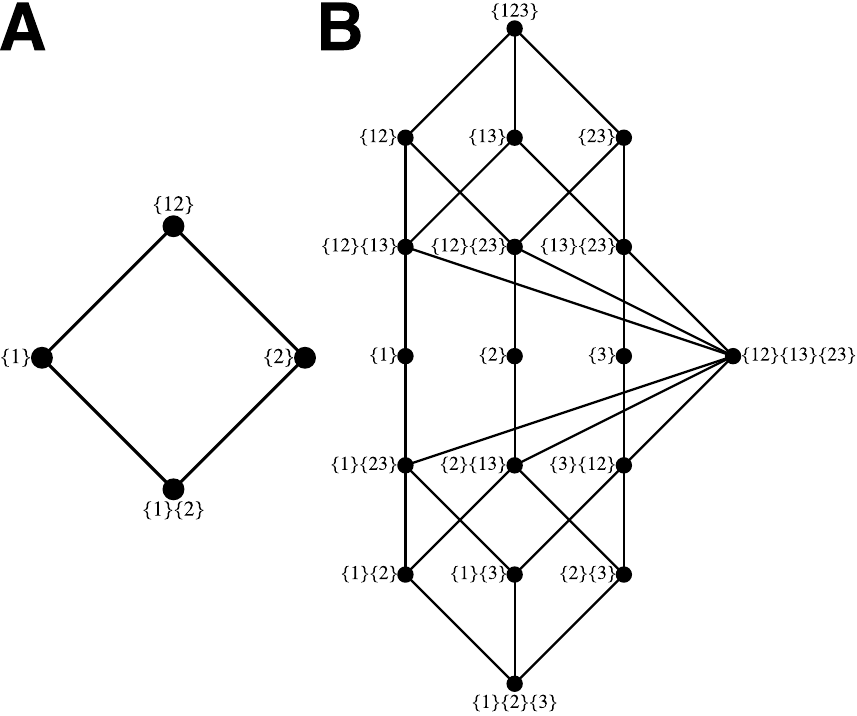}
\caption{Redundancy lattice for (A) 3 and (B) 4 variables.}
\label{redundancy_lattices}
\end{figure}

For a collection of sources $\alpha \in \mathcal{A}({\bf R})$, the PI-function, denoted $\Pi_{{\bf R}}$, is defined implicitly by
\begin{equation}\label{piImplicit}
I_{\min}(S; \alpha) = \sum_{\beta \preccurlyeq \alpha} \Pi_{{\bf R}}(S; \beta).
\end{equation} 
Formally, $\Pi_{{\bf R}}$ corresponds to the the M$\ddot{\text{o}}$bius inverse of $I_{\min}$ \cite{Rota1964,Stanley1997}.  From this relationship, it is clear that $\Pi_{{\bf R}}$ can be calculated recursively as
\begin{equation}\label{piRecursive}
\Pi_{{\bf R}}(S; \alpha) = I_{\min}(S; \alpha) - \sum_{\beta \prec \alpha} \Pi_{{\bf R}}(S; \beta).
\end{equation}
Put into words, $\Pi_{{\bf R}}(S; \alpha)$ quantifies the information provided redundantly by the sources of $\alpha$ that is not provided by any simpler collection of sources  (i.e., any $\beta$  lower than $\alpha$ on the redundancy lattice).  In Appendix D, it is shown that $\Pi_{{\bf R}}$ can be written in closed form as
\begin{equation}\label{piCF}
\Pi_{{\bf R}}(S; \alpha) = I_{\min}(S; \alpha) -  \sum_{s} p(s) \max_{\beta \in \alpha^-} \min_{{\bf B} \in \beta} I(S=s; {\bf B})
\end{equation}
where $\alpha^-$ represents the nodes immediately below $\alpha$ in the redundancy lattice.  From this formulation, it is readily shown that $\Pi_{{\bf R}}$ is nonnegative (see Appendix D), and thus can be naturally interpreted as an informational quantity associated with the sources of $\alpha$. 

The decomposition of mutual information into a sum of PI-terms follows from
\begin{equation}\label{uncondDecomp}
I(S; {\bf A}) = I_{\min}(S; \{{\bf A}\}) = \sum_{\beta \preccurlyeq \{{\bf A}\}} \Pi_{{\bf R}}(S; \beta).
\end{equation}
For the 3-variable case ${\bf R} = \{R_1, R_2\}$, Equation \eqref{uncondDecomp} yields
\begin{align}
I(S; R_1) & = \Pi_{{\bf R}}(S; \{1\}) + \Pi_{{\bf R}}(S; \{1\}\{2\}) \\
& \mbox{and} \notag\\
I(S; R_1, R_2) & = \Pi_{{\bf R}}(S; \{1\}) + \Pi_{{\bf R}}(S; \{2\}) \notag \\
				& + \Pi_{{\bf R}}(S; \{1\}\{2\}) + \Pi_{{\bf R}}(S; \{12\}).
\end{align}

The relationship between these equations can be represented as a \emph{partial information (PI) diagram} (FIG. 3A), which illustrates the way in which the total information that ${\bf R}$ provides about $S$ is distributed across various combinations of sources. Furthermore, comparing this diagram with FIG. 1 makes immediately clear the meaning of each partial information term.  First, from Equation \eqref{piCF}, we have that $\Pi_{{\bf R}}(S; \{1\}\{2\}) = I_{\min}(S; \{1\}\{2\})$, which, from the definition of $I_{\min}$, corresponds to the redundancy for $R_1$ and $R_2$.  The unique information for $R_1$ is given by $\Pi_{{\bf R}}(S; \{1\}) = I(S; R_1) - I_{\min}(S; \{1\}\{2\})$, which is the total information from $R_1$ minus the redundancy, and likewise for $R_2$.  Finally, the additional information provided by the combination of $R_1$ and $R_2$ is given by $\Pi_{{\bf R}}(S; \{12\})$, corresponding to their synergy.

To fix ideas, consider the example in FIG. 4A.  From the symmetry of the distribution, it is clear that $R_1$ and $R_2$ must provide the same amount of information about $S$.  Indeed, this is easily verified, with $I(S; R_1) = I(S; R_2) = -\frac{1}{3} \log \frac{1}{3} - \frac{2}{3} \log \frac{2}{3}$.  However, it is also clear that $R_1$ and $R_2$ provide information about different outcomes of $S$.  In particular, given knowledge of $R_1$, one can determine conclusively whether or not outcome $S=2$ occurs (which is not the case for $R_2$), and likewise for $R_2$ and outcome $S=1$.  This feature is captured by $\Pi_{{\bf R}}(S; \{1\}) = \Pi_{{\bf R}}(S; \{2\}) = \frac{1}{3}$, indicating that $R_1$ and $R_2$ each provide $\frac{1}{3}$ bits of unique information about $S$.  The redundant information, $\Pi_{{\bf R}}(S; \{1\}\{2\}) = \log 3 - \log 2$, captures the fact that knowledge of either $R_1$ or $R_2$ reduces uncertainty about $S$ from three equally likely outcomes to two.  Finally, $R_1$ and $R_2$ also provide $\frac{1}{3}$ bits of synergistic information, i.e., $\Pi_{{\bf R}}(S; \{12\}) = \frac{1}{3}$.  This value reflects the fact that $R_1$ and $R_2$ together uniquely determine whether or not outcome $S=0$ occurs, which is not true for $R_1$ or $R_2$ alone.

Note that, unlike mutual information or interaction information, partial information is \emph{not} symmetric. For instance, the synergistic information that $R_1$ and $R_2$ provide about $S$ is not in general equal to the synergistic information that $S$ and $R_2$ provide about $R_1$.  This property is also illustrated by the example in FIG. 4A.   Given knowledge of $S$, one can uniquely determine the outcome of $R_1$ (and $R_2$), so that $S$ provides complete information about both.  Thus, it is not possible for the combination of $S$ and $R_2$ to provide any additional synergistic information about $R_1$, since there is no remaining uncertainty about $R_1$ when $S$ is known.  In contrast, as was just noted, $R_1$ and $R_2$ provide $\frac{1}{3}$ bits of synergistic information about $S$.  This asymmetry accounts for our decision to focus on information \emph{about} a particular variable $S$ throughout, since in general the analysis will differ depending on the variable of interest.  Note that total information is also asymmetric in this sense, i.e., in general $I(S; R_1, R_2) \neq I(R_1; S, R_2)$ (though, of course, it is symmetric in the sense that $I(S; R_1, R_2) = I(R_1, R_2; S)$). 

The general structure of PI-diagrams becomes clear when we consider the decomposition for four variables (FIG. 3B).  First, note that all of the possibilities for three variables are again present for four.  In particular, each element of ${\bf R}$ can provide unique information (regions labeled \{1\}, \{2\}, and \{3\}), information redundantly with one other variable (\{1\}\{2\}, \{1\}\{3\}, and \{2\}\{3\}), or information synergistically with one other variable (\{12\}, \{13\}, and \{23\}).  Additionally, information can be provided redundantly by all three variables (\{1\}\{2\}\{3\}) or provided by their three-way synergy (\{123\}).  More interesting are the new kinds of terms representing combinations of redundancy and synergy.  For instance, the regions marked \{1\}\{23\}, \{2\}\{13\}, and \{3\}\{12\} represent information that is available redundantly from either one variable considered individually or the other two considered together.  Or, for instance, the region labeled \{12\}\{13\}\{23\} represents the information provided redundantly by the three possible two-way synergies.  In general, the PI-atom for a collection of sources corresponds to the information provided redundantly by the synergies of all sources in the collection.  This point also clarifies our earlier claim that unique information is best thought of as a degenerate case: unique information corresponds to the combination of first-order redundancy and first-order synergy.

In general, a PI-diagram for $n$ variables, $S$ and ${\bf R} = \{R_1, R_2, \ldots, R_{n-1}\}$, consists of the following (see Fig. S2 in Appendix E).  First, for each element $R_i \in {\bf R}$  there is a region corresponding to $I(S; R_i)$.  Then, for every subset ${\bf A}$ of ${\bf R}$ with two or more elements, $I(S; {\bf A})$ is depicted as a region containing $I(S; A)$ for all $A \in {\bf A}$ but not coextensive with $\bigcup_{A \in {\bf A}} I(S; A)$.  The difference between $I(S; {\bf A})$ and $\bigcup_{A \in {\bf A}} I(S; A)$ represents the synergy for ${\bf A}$, the information gained from the combined knowledge of all elements in ${\bf A}$ that is not available from any subset.  In addition, regions of the diagram intersect generically, representing all possibilities for redundancy.  In total, a PI-diagram is composed of the $(n-1)$-th Dedekind number \cite{Comtet1974} of PI-atoms, same as the cardinality of $\mathcal{A}({\bf R})$ (see Appendix C).  As described above, each PI-atom represents the redundancy of synergies for a particular collection of sources, corresponding to one distinct way for the components of ${\bf R}$ to contribute information about $S$.

\begin{figure}[t!]
  \centering
    \includegraphics{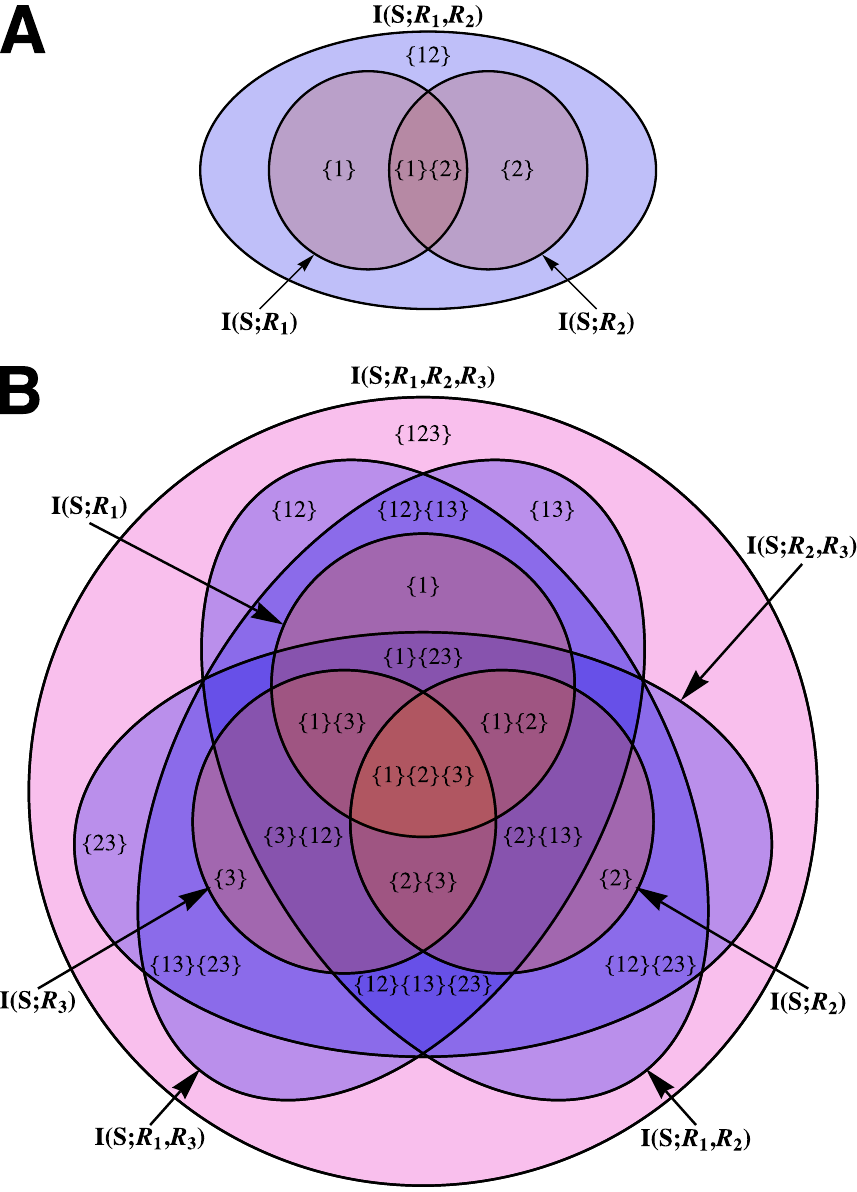}
\caption{Partial information diagrams for (A) 3 and (B) 4 variables.}
\label{pi_diagrams}
\end{figure}

Finally, it is instructive to consider the relationship between the redundancy lattice and PI-diagram for $n$ variables.  First, we note that $I_{\min}$ is analogous to set intersection for PI-diagrams, consistent with the idea of redundancy as overlapping information.  Specifically, $I_{\min}(S; \{{\bf A}_1, {\bf A}_2, \ldots, {\bf A}_k\})$ corresponds to the region $\bigcap_{i} I(S; {\bf A}_i)$.  From this correspondence between $I_{\min}$ and set intersection, we can establish the following connection: for $\alpha, \beta \in \mathcal{A}({\bf R})$, $\alpha$ is lower than $\beta$ in the redundancy lattice if and only if $\bigcap_{{\bf A} \in \alpha} I(S; {\bf A})$ is a subset of $\bigcap_{{\bf B} \in \beta} I(S; {\bf B})$ in the PI-diagram.  Consequently, the redundancy lattice and PI-diagram can be viewed as complementary representations of the same structure, with the PI-diagram a collapsed version of the redundancy lattice formed by embedding regions according to the lattice ordering.  

\section{Why Interaction Information is \\Sometimes Negative}

We next show how PI-decomposition can be used to understand the conditions under which interaction information, the standard generalization of mutual information to multivariate interactions, is negative. The interaction information for three variables is given by
\begin{equation}\label{3varII}
I(S; R_1; R_2) = I(S; R_1 | R_2) - I(S; R_1)
\end{equation}
and for $n > 3$ variables is defined recursively as
\begin{align}\label{nvarII}
I(S; R_1; R_2; \ldots; R_{n-1}) = & I(S; R_1; R_2; \ldots; R_{n-2} | R_{n-1}) \notag \\ 
						& - I(S; R_1; R_2; \ldots; R_{n-2})
\end{align}
where the conditional interaction information is defined by simply including the conditioning in all terms of the original definition.  Interaction information is symmetric for all permutations of its arguments, and is traditionally interpreted as the information shared by all $n$ variables beyond that which is shared by any subset of those variables.    

\begin{figure}[t!]
  \centering
    \includegraphics{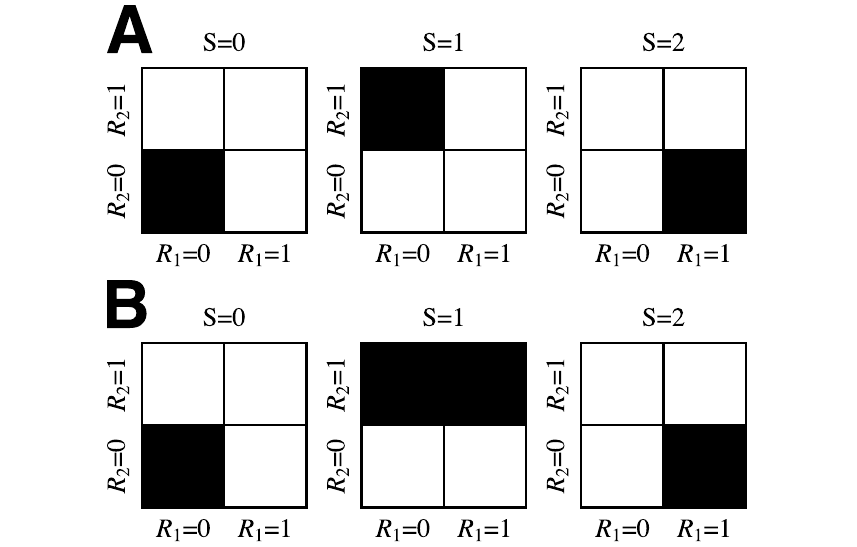}
\caption{Probability distributions for $S \in \{0,1,2\}$ and $R_1, R_2 \in \{0,1\}$.  Black tiles represent equiprobable outcomes. White tiles are zero-probability outcomes.}
\label{sample_dists}
\end{figure}

For 3-variable interaction information, a positive value is naturally interpreted as indicating a situation in which any one variable of the system enhances the correlation between the other two.  For example, a positive value for Equation \eqref{3varII} indicates that knowledge of $R_2$ enhances the correlation between $S$ and $R_1$ (and likewise for all other variable permutations).  Thus, in the terminology used here, a positive value for $I(S; R_1; R_2)$ indicates the presence of synergy.  On the other hand, a negative value for $I(S; R_1; R_2)$ indicates a situation in which any one variable accounts for or ``explains away'' \cite{Pearl1988} the correlation between the other two.  In other words, a negative value for $I(S; R_1; R_2)$ indicates redundancy.  Indeed, $I(S; R_1; R_2)$ is a widely used measure of synergy in neuroscience, where it is interpreted in exactly this way \cite{Brenner2000,Panzeri1999,Schneidman2003,Latham2005}.

The PI-decomposition for 3-variable interaction information (FIG. 5A; see also Fig. S3 in Appendix E) confirms this interpretation, with $I(S; R_1; R_2)$ equal to the difference between the synergistic and the redundant information, i.e.,
\begin{equation}
I(S; R_1; R_2) = \Pi_{{\bf R}}(S; \{12\}) - \Pi_{{\bf R}}(S; \{1\}\{2\}).
\end{equation}
Thus, it is indeed the case that positive values indicate synergy and negative values indicate redundancy.  

However, PI-decomposition also makes clear that $I(S; R_1; R_2)$ confounds redundancy and synergy, with the meaning of interaction information ambiguous for any system that exhibits a mixture of the two  (cf. \cite{Jakulin2003}, who suggest the possibility of mixed redundancy and synergy, but without attempting to disentangle them).  For instance, consider again the example in FIG. 4A.  As described earlier, in this case $R_1$ and $R_2$ provide $\log 3 - \log 2$ bits of redundant information and $\frac{1}{3}$ bits of synergistic information.  Consequently, $I(S; R_1; R_2)$ is negative because there is more redundancy than synergy, despite the fact that the system clearly exhibits synergistic interactions.  As a second example, consider the distribution in FIG. 4B.  In this case, $R_1$ and $R_2$ provide $\frac{1}{2}$ bits of redundant information, corresponding to the fact that knowledge of either $R_1$ or $R_2$ reduces uncertainty about the outcomes $S=0$ and $S=2$.  Additionally, $R_1$ and $R_2$ provide $\frac{1}{2}$ bits of synergistic information, reflecting the fact that $R_1$ and $R_2$ together provide complete information about outcomes $S=0$ and $S=2$, which is not true for either alone.  Thus, the interaction information in this case is equal to zero despite the presence of both redundant and synergistic interactions, because redundancy and synergy are balanced.

The situation is worse for four-variable interaction information, which is known to violate the interpretation that positive values indicate (pure) synergy and negative values indicate (pure) redundancy \cite{Bell2003,Anastassiou2007}.  To demonstrate, consider the case of 3-parity, which is the higher-order form of the exclusive-OR, or 2-parity, function mentioned earlier.  In this case, we have a system of four binary random variables, $S$ and ${\bf R} = \{R_1, R_2, R_3\}$, where the eight outcomes for ${\bf R}$ are equiprobable and $S = R_1 \oplus R_2 \oplus R_3$.  Intuitively, this corresponds to a case of pure synergy, since the value of $S$ can be determined only when all of the $R_i$ are known.  Indeed, using Eq. \eqref{nvarII} we find that $I(S; R_1; R_2; R_3)$ for this system is equal to $+1$ bit, as expected from the interpretation that positive values indicate synergy.  However, now consider a second system of binary variables, this time where the two outcomes of $S$ are equiprobable and $R_1$, $R_2$, and $R_3$ are all copies of $S$.  Clearly this corresponds to a case of pure redundancy, since the value of $S$ can be determined uniquely from knowledge of any $R_i$, but $I(S; R_1; R_2; R_3)$ for this system is again equal to $+1$ bit, same as the case of pure synergy.  Thus, a completely redundant system is assigned a positive value for the interaction information, in clear violation of the idea that redundancy is indicated by negative values.  Worse still, the 4-variable interaction information fails to distinguish between the polar opposites of purely synergistic and purely redundant information.  

\begin{figure}[t!]
  \centering
    \includegraphics{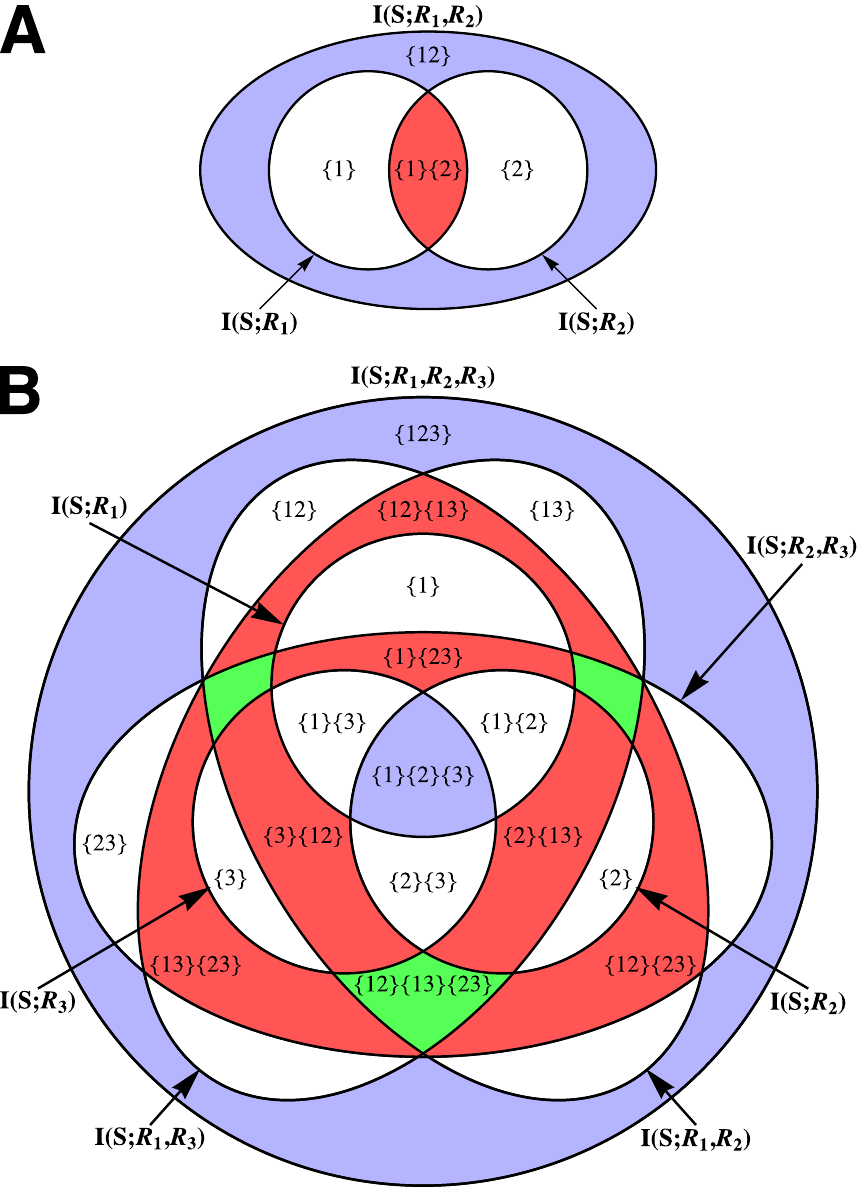}
\caption{PI-decomposition of interaction information for (A) 3 and (B) 4 variables.  Blue and red regions represent PI-terms that are added and subtracted, respectively.  The green region in (B) represents a PI-term that is subtracted twice.}
\label{interaction_info}
\end{figure}

The PI-decomposition for 4-variable interaction information (FIG. 5B; see also Fig. S4 in Appendix E) clarifies why this is the case.  In terms of PI-atoms, $I(S; R_1; R_2; R_3)$ is given by
\begin{align}
& \Pi_{{\bf R}}(S; \{123\}) + \Pi_{{\bf R}}(S; \{1\}\{2\}\{3\}) \notag \\
 - & \Pi_{{\bf R}}(S; \{1\}\{23\}) - \Pi_{{\bf R}}(S; \{2\}\{13\}) - \Pi_{{\bf R}}(S; \{3\}\{12\}) \notag \\
 - & \Pi_{{\bf R}}(S; \{12\}\{13\}) - \Pi_{{\bf R}}(S; \{12\}\{23\}) - \Pi_{{\bf R}}(S; \{13\}\{23\}) \notag \\
 - & 2 \times \Pi_{{\bf R}}(S; \{12\}\{13\}\{23\}).
\end{align}
Thus, $I(S; R_1; R_2; R_3)$ is equal to the sum of third-order synergy (\{123\}) and third-order redundancy (\{1\}\{2\}\{3\}), minus the information provided redundantly by a first- and second-order synergy (\{1\}\{23\}, \{2\}\{13\}, and \{3\}\{12\}), minus the information provided redundantly by two second-order synergies (\{12\}\{13\}, \{12\}\{23\}, and \{13\}\{23\}), and minus twice the information provided redundantly by all three second-order synergies (\{12\}\{13\}\{23\}).  Thus, systems with pure synergy and pure redundancy have the same value for $I(S; R_1; R_2; R_3)$ because 4-variable interaction information adds in the highest-order synergy and redundancy terms.  More generally, the PI-decomposition for $I(S; R_1; R_2; R_3)$ shows why it is difficult to interpret as a meaningful quantity, and as one might expect the story only becomes more complicated in higher dimensions.  Thus, although one can readily decompose interaction information into a collection of partial information contributions, and understand the conditions under which it will be positive or negative depending on the relative magnitudes of these contributions, the utility of interaction information for larger systems is unclear.  

\section{Discussion}

The main objective of this paper has been to quantify multivariate information in such a way that the structure of variable interactions is illuminated.  This was accomplished by first defining a general measure of redundant information, $I_{\min}$, which satisfies a number of intuitive properties for a measure of redundancy.  Next, it was shown that $I_{\min}$ induces a lattice structure over the set of possible information sources, referred to as the redundancy lattice, which characterizes the distinct ways that information can be distributed amongst a set of sources.  From this lattice, a measure of partial information was derived that captures the unique information contributed by each possible combination of sources.  It was then shown that mutual information decomposes into a sum of these partial information terms, so that the total information provided by a source is broken down into a collection of partial information contributions.  Moreover, it was demonstrated that each of these terms supports clear interpretation as a particular combination of redundant and synergistic interactions between specific subsets of variables.  Finally, we discussed the relationship between partial information decomposition and interaction information, the current de facto measure of multivariate interactions, and used partial information to clarify the confusing property that interaction information is sometimes negative.

One obvious challenge with applying these ideas is that the number of partial information terms grows rapidly for larger systems.  For instance, with 9 variables there are more than $5 \times 10^{22}$ possibilities \cite{Wiedemann1991}, and beyond that the Dedekind numbers are not even currently known.  Thus, clearly an important direction for future work is to determine efficient ways of calculating partial information terms for larger systems. To this end, the lattice structure of the terms is likely to play an essential role.  As with any ordered data structure, the fact that the space of possibilities is highly organized can be readily exploited for efficient use.  For instance, as a simple example, if $I_{\min}$ is calculated in a descending fashion over the nodes of the redundancy lattice and at a certain juncture has a value of zero, all of the terms below that node can immediately be eliminated simply from the monotonicity of $I_{\min}$ (see Appendix D).  Moreover, if the Markov property or any other constraints hold between the variables, many of the possible partial information terms can also be excluded.  Finally, these considerations notwithstanding, it should also be emphasized that 3-variable interaction is the current state of the art, and thus even the simplest form of partial information decomposition can be used to address a number of outstanding questions.

In physics, for example, 3-variable interactions have been explored in relation to the non-separability of quantum systems \cite{Cerf1997} and in the study of many-body correlation effects \cite{Matsuda2000}.  In neuroscience, the concepts of synergy and redundancy for three variables have been examined in the context of neural coding in a number of theoretical and empirical investigations \cite{Brenner2000,Panzeri1999,Schneidman2003,Latham2005,Gat1999,Narayanan2005}.  In genetics, multivariate dependencies arise in the analysis of gene-gene and gene-environment interactions in studies of human disease susceptibility \cite{Anastassiou2007,Moore2006, Chanda2007}.  Moreover, similar issues have also been explored in machine learning \cite{Pearl1988,Jakulin2003,Mackay2003}, ecology \cite{Orloci2002}, quantum information theory \cite{Vedral2002}, information geometry \cite{Amari2001}, rough set analysis \cite{Gediga2003}, and cooperative game theory \cite{Grabisch1999}.  Thus, in all of these cases, the 3-variable form of partial information decomposition can be applied immediately to illuminate the structure of multivariate dependencies, while the general form provides a clear way forward in the study of more complex systems of interactions.

\begin{acknowledgments}
We thank O. Sporns, J. Beggs, A. Kolchinsky, and L. Yaeger for helpful comments.  This work was supported in part by NSF grant IIS-0916409 (to R.D.B.) and an NSF IGERT traineeship (to  P.L.W.).
\end{acknowledgments}

\appendix

\section{Measures of Specific Information}

Measures of specific information are discussed in \cite{Deweese1999} in the context of quantifying the information that specific neural responses provide about a stimulus ensemble.  For random variables $S$ and $R$, representing stimuli and responses, respectively, the information that $R$ provides about $S$ is decomposed according to
\begin{equation}
I(S; R) = \sum_{r \in R} p(r) i_{r}(r) \tag{A1}
\end{equation}
and
\begin{equation}
i_{r}(r) = H(S) - H(S | r) \tag{A2}
\end{equation}
where $H(S)$ is the entropy of $S$ and $i_{r}(r)$ is the \emph{response-specific information} associated with each $r \in R$.  The response-specific information quantifies the change in uncertainty about $S$ when response $r$ is observed.  In \cite{Deweese1999}, it is shown that $i_{r}$ is the unique measure of specific information that satisfies additivity, though it is also possible for $i_{r}$ to be negative.

To distinguish the different role played by stimuli as opposed to responses, an alternative measure of specific information is proposed in \cite{Butts2003}.  The \emph{stimulus-specific information} for an outcome $s \in S$ is defined as
\begin{equation*}
i_{s}(s) = \sum_{r \in R} p(r|s) i_r(r). \tag{A3}
\end{equation*}
Like the response-specific information, the weighted average of $i_{s}(s)$ gives the mutual information $I(S;R)$.  Stimulus-specific information quantifies the extent to which a particular stimulus $s$ tends to evoke responses that are informative about the entire ensemble $S$ (responses with high values for $i_{r}$).

Finally, both \cite{Deweese1999} and \cite{Butts2003} also discuss $I(S=s; R)$, the measure of specific information used here (Eq. \eqref{specInfo}).  In \cite{Butts2003}, $I(S=s; R)$ is described as the reduction in surprise of a particular stimulus $s$ gained from each response, averaged over all responses associated with that stimulus.  Thus, whereas $i_{s}(s)$ weights each response $r$ according to the information that it contributes about the entire ensemble $S$, $I(S=s; R)$ quantifies only the information that $R$ provides about the particular outcome $S=s$.  In \cite{Deweese1999}, it is proven that $I(S=s; R)$ is the only measure of specific information that is strictly nonnegative.

\section{Lattice Theory Definitions}

\begin{figure*}[t!]
  \centering
    \includegraphics[width=0.75\textwidth]{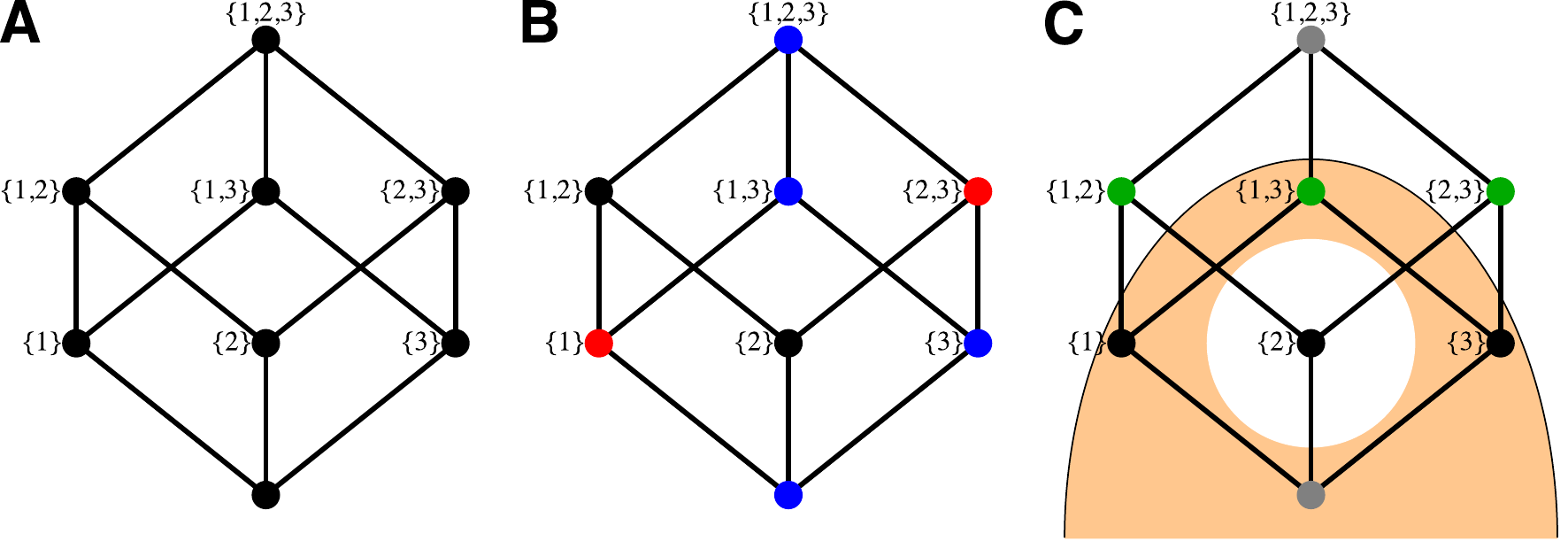}
       \begin{flushleft}FIG. S1: Basic lattice-theoretic concepts. (A) Hasse diagram of the lattice $\langle \mathcal{P}(X) , \subseteq\rangle$ for $X = \{1, 2, 3\}$. (B) An example of a chain (blue nodes) and an antichain (red nodes). (C) The top $\top$ and bottom $\bot$ are shown in gray.  Green nodes correspond to $\{1, 2, 3\}^-$, the set of elements covered by $\{1, 2, 3\}$. The orange region represents $\downarrow \{1,3\}$, the down-set of $\{1, 3\}$.\end{flushleft}
\end{figure*}

Here we review only the basic concepts of lattice theory needed for supporting proofs.  For a thorough treatment, see \cite{Davey2002, Gratzer2003}.

\begin{mydef}
A pair $\langle X, \leqslant \rangle$ is a \emph{partially ordered set} or \emph{poset} if  $\leqslant$ is a binary relation on $X$ that is reflexive, transitive and antisymmetric.
\end{mydef}

\begin{mydef} \label{maxMinSets}
Let $Y \subseteq X$.  Then $a \in Y$ is a \emph{maximal element} in $Y$ if for all $ b \in Y, a \leqslant b \Rightarrow a = b$.  A \emph{minimal element} is defined dually.  We denote the set of maximal elements of $Y$ by $\overline{Y}$ and the set of minimal elements by $\underline{Y}$.
\end{mydef}

\begin{mydef}
Let $\langle X, \leqslant \rangle$ be a poset, and let $Y \subseteq X$.  An element $x \in X$ is an \emph{upper bound} for $Y$ if for all $y \in Y, y \leqslant x$.  A \emph{lower bound} for $Y$ is defined dually.
\end{mydef}

\begin{mydef}
An element $x \in X$ is the \emph{least upper bound} or \emph{supremum} for $Y$, denoted $\sup Y$, if $x$ is an upper bound of $Y$ and for all $y \in Y$ and all $z \in X, y \leqslant z$ implies $x \leqslant z$.  The \emph{greatest upper bound} or \emph{infimum} for $Y$, denoted $\inf Y$, is defined dually.
\end{mydef}

\begin{mydef}
A poset $\langle X, \leqslant \rangle$ is a \emph{lattice} if, and only if, for all $x, y \in X$ both $\inf \{x, y\}$ and $\sup \{x, y\}$ exist in $X$.  If $\langle X, \leqslant \rangle$ is a lattice, it is common to write $x \wedge y$, the \emph{meet} of $x$ and $y$, and $x \vee y$, the \emph{join} of $x$ and $y$, for $\inf \{x, y\}$ and $\sup \{x, y\}$, respectively.  For $Y \subseteq X$, we use $\bigwedge Y$ and $\bigvee Y$ to denote the meet and join of all elements in $Y$, respectively.
\end{mydef}

\begin{mydef}
For $a, b \in X$, we say that $a$ is \emph{covered by} $b$ (or $b$ \emph{covers} $a$) if $a < b$ and $a \leqslant c < b \Rightarrow a = c$.  The set of elements that are covered by $b$ is denoted by $b^-$.
\end{mydef}

The classic example of a lattice is the power set of a set $X$ ordered by inclusion, denoted $\langle \mathcal{P}(X), \subseteq\rangle$.  Lattices are naturally represented by Hasse diagrams, in which nodes correspond to members of $X$ and an edge exists between elements $x$ and $y$ if $x$ covers $y$.  FIG. S1A depicts the Hasse diagram for the lattice $\langle \mathcal{P}(X) , \subseteq\rangle$ with $X = \{1, 2, 3\}$.

\begin{mydef}
If $\langle X, \leqslant \rangle$ is a poset, $Y \subseteq X$ is a \emph{chain} if for all $a, b \in Y$ either $a \leqslant b$ or $b \leqslant a$.  Y is an \emph{antichain} if $a \leqslant b$ only if $a = b$. 
\end{mydef}

FIG. S1B shows examples of a chain and an antichain.

\begin{mydef}
If there exists an element $\bot \in X$ with the property that $\bot \leqslant x$ for all $x \in X$, we call $\bot$ the \emph{bottom element} of $X$.  The \emph{top element} of X, denoted by $\top$, is defined dually.
\end{mydef}

\begin{mydef}
For any $x \in X$, we define
\begin{equation*}
\downarrow x = \{y \in X : y \leqslant x\} \mbox{ and } \dot{\downarrow} x = \{y \in X: y < x\}
\end{equation*}
where $\downarrow x$ and $\dot{\downarrow} x$ are called the \emph{down-set} and \emph{strict down-set} of $x$, respectively.
\end{mydef}

FIG. S1C illustrates the concepts of top and bottom elements, covering relations, and down-sets.

\section{$\mathcal{A}({\bf R})$ and the Redundancy Lattice}

Formally, $\mathcal{A}({\bf R})$ corresponds to the set of antichains on the lattice $\langle\mathcal P({\bf R}),\subseteq\rangle$ (excluding the empty set).  The cardinality of this set for $|{\bf R}|=n-1$ is given by the $(n-1)$-th Dedekind number, which for $n = 2, 3, 4,\ldots$ is $1, 4, 18, 166, 7579, \ldots$ (\cite{Comtet1974}, p. 273).  The fact that $\langle \mathcal{A}({\bf R}), \preccurlyeq \rangle$ forms a lattice, which we call the redundancy lattice, is proven in \cite{Crampton2000}, where the corresponding lattice is denoted $\langle \mathcal{A}(X), \preccurlyeq^\prime \rangle$ (see also \cite{Crampton2001}).  As shown in \cite{Crampton2000}, the meet ($\wedge$) and join ($\vee$) for this lattice are given by
\begin{align}
\alpha \wedge \beta & = \underline{\alpha \cup \beta} \label{RLmeet}\tag{A4}\\ 
& \mbox{and} \notag\\
\alpha \vee \beta & = \underline{\uparrow\alpha \cap \uparrow\beta} \tag{A5}.
\end{align}

\section{Supporting Proofs}

\begin{mythm}\label{SpecInfoNonNeg} 
$I(S=s; {\bf A})$ is nonnegative.
\end{mythm}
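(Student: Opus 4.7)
The plan is to show that $I(S=s;{\bf A})$ can be rewritten as a Kullback–Leibler divergence between two distributions over ${\bf A}$, and then to invoke the standard nonnegativity of KL divergence (Gibbs' inequality).

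First I would combine the two logarithms inside the sum to obtain
\begin{equation*}
I(S=s;{\bf A}) = \sum_{{\bf a}} p({\bf a}\mid s)\,\log \frac{p(s\mid{\bf a})}{p(s)}.
\end{equation*}
Next I would apply Bayes' rule, $p(s\mid{\bf a})/p(s) = p({\bf a}\mid s)/p({\bf a})$, whenever $p({\bf a})>0$ (terms with $p({\bf a}\mid s)=0$ contribute $0$ under the usual convention $0\log 0 = 0$, so they cause no issue). This converts the expression into
\begin{equation*}
I(S=s;{\bf A}) = \sum_{{\bf a}} p({\bf a}\mid s)\,\log \frac{p({\bf a}\mid s)}{p({\bf a})} = D_{\mathrm{KL}}\bigl(p({\bf A}\mid s)\,\|\,p({\bf A})\bigr).
\end{equation*}

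Finally I would appeal to Gibbs' inequality (equivalently, Jensen applied to the convex function $-\log$): for any two probability distributions $q$ and $r$ on a common alphabet, $\sum_x q(x)\log(q(x)/r(x)) \geq 0$, with equality iff $q=r$. Applied with $q=p({\bf A}\mid s)$ and $r=p({\bf A})$, this yields $I(S=s;{\bf A}) \geq 0$, with equality iff $S=s$ is independent of ${\bf A}$.

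There is essentially no obstacle here; the only subtlety is the bookkeeping of zero-probability outcomes (making sure the Bayes-rule step is valid wherever the summand is nonzero), and noting that the convention $0\log 0 = 0$ lets us restrict the sum to the support of $p({\bf a}\mid s)$, on which $p({\bf a})>0$ automatically.
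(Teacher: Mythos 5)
Your proof is correct and follows essentially the same route as the paper: both identify $I(S=s;{\bf A})$ with the Kullback--Leibler divergence $D\bigl(p({\bf a}\mid s)\,\|\,p({\bf a})\bigr)$ and invoke its nonnegativity (the information inequality). The only difference is that you spell out the Bayes'-rule manipulation that the paper states as a one-line identity.
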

\begin{proof}
\begin{align*}
I(S=s; {\bf A}) = D(p({\bf a}|s) \parallel p({\bf a})) \geq 0
\end{align*}
where $D$ is the Kullback-Leibler distance and the last step follows from the information inequality (\cite{Cover2006}, p. 26).
\end{proof}

\begin{mylma}\label{SpecInfoIncr} 
$I(S=s; {\bf A})$ increases monotonically on the lattice $\langle \mathcal{P}({\bf R}) , \subseteq\rangle$.
\end{mylma}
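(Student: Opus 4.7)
The plan is to leverage the identification established in the preceding Theorem, namely $I(S=s;\mathbf{A}) = D\bigl(p(\mathbf{a}\mid s)\,\|\,p(\mathbf{a})\bigr)$, so that monotonicity on $\langle \mathcal{P}(\mathbf{R}),\subseteq\rangle$ reduces to the familiar fact that Kullback--Leibler divergence is non-decreasing under refinement of the underlying random variable (equivalently, non-increasing under marginalization).

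Concretely, I would fix two sources with $\mathbf{A}\subseteq\mathbf{B}\subseteq\mathbf{R}$ and write $\mathbf{B}=\mathbf{A}\cup\mathbf{C}$ where $\mathbf{C}=\mathbf{B}\setminus\mathbf{A}$. Marginalizing over $\mathbf{C}$ in both the conditional and unconditional distributions gives $p(\mathbf{a}\mid s)=\sum_{\mathbf{c}}p(\mathbf{a},\mathbf{c}\mid s)$ and $p(\mathbf{a})=\sum_{\mathbf{c}}p(\mathbf{a},\mathbf{c})$. Applying the log-sum inequality (\cite{Cover2006}, p.~31) term-by-term in $\mathbf{a}$ then yields
\begin{equation*}
\sum_{\mathbf{a}}p(\mathbf{a}\mid s)\log\frac{p(\mathbf{a}\mid s)}{p(\mathbf{a})}
\;\leq\;
\sum_{\mathbf{a},\mathbf{c}}p(\mathbf{a},\mathbf{c}\mid s)\log\frac{p(\mathbf{a},\mathbf{c}\mid s)}{p(\mathbf{a},\mathbf{c})},
\end{equation*}
which, via Theorem 1, is exactly $I(S=s;\mathbf{A})\leq I(S=s;\mathbf{B})$. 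Since the relation $\mathbf{A}\subseteq\mathbf{B}$ on $\mathcal{P}(\mathbf{R})$ is the defining order of the lattice, this establishes monotonicity.

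There is essentially no obstacle: the work is almost entirely a bookkeeping reduction to the log-sum inequality, once one recognizes the KL form of $I(S=s;\mathbf{A})$. The only mild subtlety is handling zero-probability outcomes ($p(\mathbf{a})=0$ forces $p(\mathbf{a}\mid s)=0$, contributing nothing, under the usual convention $0\log(0/0)=0$), but this is absorbed into the standard statement of the log-sum inequality. If desired, one could present the argument inductively along a chain of single-element extensions $\mathbf{A}\subset \mathbf{A}\cup\{R_i\}\subset\cdots\subset\mathbf{B}$ obtained from any cover sequence in the lattice, but the single-step marginalization argument above suffices.
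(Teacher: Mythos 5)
Your proof is correct, and it reaches the conclusion by a slightly different route than the paper. You start from the identity $I(S=s;\mathbf{A}) = D\bigl(p(\mathbf{a}\mid s)\,\|\,p(\mathbf{a})\bigr)$ (which is exactly what the paper's Theorem~\ref{SpecInfoNonNeg} records) and then invoke the log-sum inequality to show that this divergence can only decrease when the pair $(\mathbf{A},\mathbf{C})$ is marginalized down to $\mathbf{A}$; your application of the inequality, with weights $p(\mathbf{a},\mathbf{c}\mid s)$ against $p(\mathbf{a},\mathbf{c})$ for each fixed $\mathbf{a}$, is set up correctly, and your remark about the $0\log(0/0)=0$ convention disposes of the degenerate cases. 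The paper instead computes the difference $I(S=s;\mathbf{B})-I(S=s;\mathbf{A})$ explicitly and rewrites it as a weighted average of conditional Kullback--Leibler divergences $D\bigl(p(\mathbf{c}\mid\mathbf{a},s)\,\|\,p(\mathbf{c}\mid\mathbf{a})\bigr)\geq 0$ --- in effect the chain rule for relative entropy, which is the specific-information analogue of $I(S;\mathbf{B}) - I(S;\mathbf{A}) = I(S;\mathbf{C}\mid\mathbf{A})$. The two arguments rest on the same underlying convexity fact; yours is marginally more economical since it delivers only the inequality, while the paper's identifies exactly what the gap is (an averaged conditional divergence), which is mildly more informative but not needed elsewhere. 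Either proof is acceptable.
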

\begin{widetext}
\begin{proof}
Consider  ${\bf A}, {\bf B}$ with ${\bf A} \subset {\bf B} \subseteq {\bf R}$. Let ${\bf C} = {\bf B} \setminus {\bf A} \neq \emptyset$. Then we have
\begin{align*}
& I(S=s; {\bf B}) - I(S=s; {\bf A}) \\
= &  \sum_{{\bf b}} p({\bf b}|s) \log \frac{p(s, {\bf b})}{p(s) p({\bf b})} - \sum_{{\bf a}} p({\bf a}|s) \log \frac{p(s, {\bf a})}{p(s) p({\bf a})} \\ 
= &  \sum_{{\bf a}} \sum_{{\bf c}} p({\bf a}, {\bf c}|s) \log \frac{p(s, {\bf a}, {\bf c})}{p(s) p({\bf a}, {\bf c})} - \sum_{{\bf a}} \sum_{{\bf c}} p({\bf a}, {\bf c}|s) \log \frac{p(s, {\bf a})}{p(s) p({\bf a})} \\
= & \sum_{{\bf a}} \sum_{{\bf c}} p({\bf a}, {\bf c}|s) \log \frac{p(s, {\bf c}|{\bf a})}{p(s|{\bf a}) p({\bf c}|{\bf a})} \\
= & \sum_{{\bf a}} p({\bf a}) \sum_{{\bf c}} p({\bf c} | {\bf a}, s) \log \frac{p({\bf c} | {\bf a}, s)}{p({\bf c}|{\bf a})} \\
= & \sum_{{\bf a}} p({\bf a}) D(p({\bf c} | {\bf a}, s)\parallel p({\bf c}|{\bf a})) \geq 0.
\end{align*}
\end{proof}
\end{widetext}

\begin{mythm}\label{IminIncr}
$I_{\min}$ increases monotonically on the lattice $\langle \mathcal{A}({\bf R}), \preccurlyeq\rangle$.
\end{mythm}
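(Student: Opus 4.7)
The plan is to unpack the definition of $\preccurlyeq$, use Lemma~\ref{SpecInfoIncr} pointwise in $s$, and then integrate over $s$. Nothing fancier should be required, since the ordering $\preccurlyeq$ was engineered precisely so that going up the redundancy lattice replaces each source by a superset (up to padding with extra sources), and specific information is already known to be monotone under $\subseteq$.

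First, I would fix $\alpha, \beta \in \mathcal{A}({\bf R})$ with $\alpha \preccurlyeq \beta$. By definition, for every ${\bf B} \in \beta$ there exists some ${\bf A}_{\bf B} \in \alpha$ with ${\bf A}_{\bf B} \subseteq {\bf B}$. Fix any outcome $s \in S$. Lemma~\ref{SpecInfoIncr} gives $I(S=s; {\bf A}_{\bf B}) \leq I(S=s; {\bf B})$ (this includes the trivial case ${\bf A}_{\bf B} = {\bf B}$). Therefore, for every ${\bf B} \in \beta$,
\begin{equation*}
\min_{{\bf A} \in \alpha} I(S=s; {\bf A}) \;\leq\; I(S=s; {\bf A}_{\bf B}) \;\leq\; I(S=s; {\bf B}).
\end{equation*}

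Since the left-hand side does not depend on ${\bf B}$, I can take the minimum over ${\bf B} \in \beta$ on the right, yielding
\begin{equation*}
\min_{{\bf A} \in \alpha} I(S=s; {\bf A}) \;\leq\; \min_{{\bf B} \in \beta} I(S=s; {\bf B}).
\end{equation*}
This inequality holds for every $s$, so multiplying by $p(s) \geq 0$ and summing over $s$ (both sides remain well-defined by Theorem~\ref{SpecInfoNonNeg}) produces
\begin{equation*}
I_{\min}(S; \alpha) = \sum_s p(s) \min_{{\bf A} \in \alpha} I(S=s; {\bf A}) \;\leq\; \sum_s p(s) \min_{{\bf B} \in \beta} I(S=s; {\bf B}) = I_{\min}(S; \beta),
\end{equation*}
which is the desired monotonicity.

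There is essentially no serious obstacle here; the only thing to be careful about is making sure the ``pointwise in $s$'' step is phrased so that the choice of ${\bf A}_{\bf B}$ is allowed to depend on ${\bf B}$ (which it must, by the quantifier structure of $\preccurlyeq$) but the bound still propagates to a uniform comparison of minima. Once that is handled, the result reduces to Lemma~\ref{SpecInfoIncr} applied inside a convex combination.
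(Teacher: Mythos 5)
Your proof is correct and rests on exactly the same ingredients as the paper's: the definition of $\preccurlyeq$ and the monotonicity of specific information (Lemma~\ref{SpecInfoIncr}), applied pointwise in $s$ and then averaged. The only difference is that you argue directly while the paper phrases the identical argument as a proof by contradiction; your direct version is, if anything, slightly cleaner on the quantifiers.
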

\begin{proof}
We proceed by contradiction.  Assume there exists $\alpha, \beta \in \mathcal{A}({\bf R})$ with $\alpha \prec \beta$ and $I_{\min}(S; \beta) < I_{\min}(S; \alpha)$.  Then, from Eq. \eqref{IminDef}, there must exist ${\bf B} \in \beta$ such that $I(S=s; {\bf B}) < I(S=s; {\bf A})$ for some outcome $s \in S$ and for all ${\bf A} \in \alpha$. Thus, from Lemma \ref{SpecInfoIncr}, there does not exist ${\bf A} \in \alpha$ such that ${\bf A} \subseteq {\bf B}$.  However, since $\alpha \prec \beta$ by assumption, there exists ${\bf A} \in \alpha$ such that ${\bf A} \subseteq {\bf B}$.
\end{proof}

\begin{mythm}\label{piCF1}
$\Pi_{{\bf R}}$ can be stated in closed form as
\begin{equation}
\Pi_{{\bf R}}(S; \alpha) = I_{\min}(S; \alpha) - \sum_{k=1}^{|\alpha^-|} (-1)^{k-1} \sum_{\substack{\mathcal{B} \subseteq \alpha^- \\|\mathcal{B}|=k}} I_{\min}(S; \bigwedge \mathcal{B}). \tag{A6} \label{piCF1Eq}
\end{equation}
\end{mythm}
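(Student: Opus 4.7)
The plan is to start from the recursive definition, Eq.~\eqref{piRecursive}, namely $\Pi_{{\bf R}}(S;\alpha) = I_{\min}(S;\alpha) - \sum_{\beta \prec \alpha}\Pi_{{\bf R}}(S;\beta)$, and show that the subtracted sum is exactly the inclusion-exclusion expression appearing on the right-hand side of \eqref{piCF1Eq}. That is, I will prove
$$\sum_{\beta \prec \alpha}\Pi_{{\bf R}}(S;\beta) = \sum_{k=1}^{|\alpha^-|}(-1)^{k-1}\sum_{\substack{\mathcal{B} \subseteq \alpha^-\\|\mathcal{B}|=k}} I_{\min}(S;\textstyle\bigwedge \mathcal{B}),$$
and substitute.

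The first step is a purely order-theoretic decomposition of the strict down-set: $\dot\downarrow \alpha = \bigcup_{\gamma \in \alpha^-}\downarrow\gamma$. The inclusion $\supseteq$ is immediate, since $\beta \preccurlyeq \gamma \prec \alpha$ gives $\beta \prec \alpha$. For $\subseteq$, any $\beta \prec \alpha$ can be extended to a saturated chain in the finite lattice $\langle \mathcal{A}({\bf R}), \preccurlyeq\rangle$ from $\beta$ to $\alpha$; the element immediately below $\alpha$ in this chain is some $\gamma \in \alpha^-$ with $\beta \preccurlyeq \gamma$, so $\beta \in \downarrow\gamma$.

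The second step applies the inclusion-exclusion principle to this union, yielding
$$\sum_{\beta \in \dot\downarrow\alpha}\Pi_{{\bf R}}(S;\beta) = \sum_{k=1}^{|\alpha^-|}(-1)^{k-1}\sum_{\substack{\mathcal{B} \subseteq \alpha^-\\|\mathcal{B}|=k}}\sum_{\beta \in \bigcap_{\gamma \in \mathcal{B}}\downarrow\gamma}\Pi_{{\bf R}}(S;\beta).$$
Next, I use the standard lattice identity $\bigcap_{\gamma \in \mathcal{B}}\downarrow\gamma = \downarrow \bigwedge\mathcal{B}$, which follows from the universal property of the meet: $\beta \preccurlyeq \gamma$ for every $\gamma \in \mathcal{B}$ iff $\beta$ is a lower bound of $\mathcal{B}$ iff $\beta \preccurlyeq \bigwedge\mathcal{B}$. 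Finally, each innermost sum collapses via the defining relation \eqref{piImplicit}, $\sum_{\beta \preccurlyeq \gamma}\Pi_{{\bf R}}(S;\beta) = I_{\min}(S;\gamma)$, applied with $\gamma = \bigwedge\mathcal{B}$. Substituting into \eqref{piRecursive} gives the claimed closed form.

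The only nontrivial step is the first: the decomposition $\dot\downarrow\alpha = \bigcup_{\gamma \in \alpha^-}\downarrow\gamma$. This is routine given that $\mathcal{A}({\bf R})$ is finite and hence every strictly descending relation factors through a covering relation; without finiteness it could fail. Everything else is either a standard combinatorial identity (inclusion-exclusion) or an elementary lattice identity (intersection of principal down-sets), so once the first step is in hand the remainder is bookkeeping. Note in particular that no sign or convergence subtleties arise, since the $\Pi_{{\bf R}}$ values being summed are well-defined real numbers by the recursive construction on the finite lattice.
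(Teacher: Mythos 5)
Your proposal is correct and follows essentially the same route as the paper's proof: decompose $\dot{\downarrow}\alpha$ as $\bigcup_{\gamma \in \alpha^-}\downarrow\gamma$, apply inclusion--exclusion to the set-additive sum of $\Pi_{{\bf R}}$ values, use the identity $\bigcap_{\gamma \in \mathcal{B}}\downarrow\gamma = \downarrow\bigl(\bigwedge\mathcal{B}\bigr)$, and collapse each principal down-set sum to $I_{\min}$ via Eq.~\eqref{piImplicit}. The only cosmetic difference is that the paper packages the sums into an explicit set-additive function $f$ and cites the down-set decomposition and meet identity rather than proving them, whereas you supply the short finiteness argument for the covering decomposition.
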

\begin{proof} 
For $\mathcal{B} \subseteq \mathcal{A}({\bf R})$, define the set-additive function $f$ as
\begin{equation*}
f(\mathcal{B}) = \sum_{\beta \in \mathcal{B}} \Pi_{{\bf R}}(S; \beta).
\end{equation*}
From Eq. \eqref{piImplicit}, it follows that $I_{\min}(S; \alpha) = f(\downarrow \alpha)$ and
\begin{align*}
\Pi_{{\bf R}}(S; \alpha) & = f(\downarrow \alpha) - f(\dot{\downarrow} \alpha) \\ 
& = f(\downarrow \alpha) - f(\bigcup_{\beta \in \alpha^-} \downarrow \beta).
\intertext{Applying the principle of inclusion-exclusion (\cite{Stanley1997}, p. 64), we have}
& = f(\downarrow \alpha) - \sum_{k=1}^{|\alpha^-|} (-1)^{k-1} \sum_{\substack{\mathcal{B} \subseteq \alpha^- \\|\mathcal{B}|=k}} f(\bigcap_{\gamma \in \mathcal{B}} \downarrow \gamma)
\intertext{and it is a basic result of lattice theory that for any lattice $L$ and $A \subseteq L$, $\bigcap_{a \in A} \downarrow a = \downarrow (\bigwedge A)$ (\cite{Davey2002}, p. 57), so we have}
& = f(\downarrow \alpha) - \sum_{k=1}^{|\alpha^-|} (-1)^{k-1} \sum_{\substack{\mathcal{B} \subseteq \alpha^- \\|\mathcal{B}|=k}} f(\downarrow (\bigwedge \mathcal{B})) \\
& = I_{\min}(S; \alpha) - \sum_{k=1}^{|\alpha^-|} (-1)^{k-1} \sum_{\substack{\mathcal{B} \subseteq \alpha^- \\|\mathcal{B}|=k}} I_{\min}(S; \bigwedge \mathcal{B}).
\end{align*}
\end{proof}

\begin{mylma}[Maximum-minimums identity]\label{MaxMinsIdent} 
Let $A$ be a set of numbers.  The maximum-minimums identity states that
\begin{align*}
\max A & = \sum_{k=1}^{|A|} (-1)^{k-1} \sum_{\substack{B\subseteq A\\|B|=k}} \min B 
\shortintertext{or conversely,}
\min A & = \sum_{k=1}^{|A|} (-1)^{k-1} \sum_{\substack{B\subseteq A\\|B|=k}} \max B.
\end{align*}
\end{mylma}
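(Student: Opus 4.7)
The plan is to give a direct combinatorial proof of the first identity by sorting the elements of $A$ and explicitly computing the coefficient with which each $a \in A$ appears on the right-hand side; the second identity will then follow by negating all elements.

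First I would sort the elements as $a_1 < a_2 < \cdots < a_n$ (since $A$ is a set, its elements are distinct). The key observation is that a size-$k$ subset $B \subseteq A$ satisfies $\min B = a_i$ if and only if $a_i \in B$ and the remaining $k-1$ elements of $B$ are chosen from the larger elements $a_{i+1}, \ldots, a_n$. Hence the number of such subsets is $\binom{n-i}{k-1}$.

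Next I would collect the coefficient of $a_i$ in the right-hand side, which is
\begin{equation*}
\sum_{k=1}^{n-i+1} (-1)^{k-1} \binom{n-i}{k-1} \;=\; \sum_{j=0}^{n-i} (-1)^j \binom{n-i}{j} \;=\; (1-1)^{n-i},
\end{equation*}
using the standard binomial identity. With the convention $0^0 = 1$, this coefficient equals $1$ when $i = n$ and $0$ otherwise. Therefore the right-hand side reduces to $a_n = \max A$, proving the first identity. The second identity then follows immediately by applying the first to the set $-A := \{-a : a \in A\}$ and using $\max(-B) = -\min B$ and $\min(-B) = -\max B$ for each $B \subseteq A$.

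The argument is short and the main (minor) obstacle is purely bookkeeping: verifying that the counting of subsets with a prescribed minimum is correct and that the resulting alternating sum collapses cleanly to $(1-1)^{n-i}$. There is no analytical subtlety, since the identity is a finite algebraic equality; any temptation to pursue a probabilistic derivation via indicator functions of the events $\{X \leq a_i\}$ (as in \cite{Ross2009}) would give the same combinatorial content dressed up in different language, so the direct proof above is preferable.
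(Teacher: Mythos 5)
Your argument is correct. The paper itself supplies no proof of this lemma---it simply defers to an introductory text (\cite{Ross2009}), where the identity is usually derived either by induction on $|A|$ or by the probabilistic/indicator-function route you mention (expanding $\mathbf{1}[\max A \leq x]$ or the probability of a union via inclusion--exclusion). Your coefficient-counting argument is a clean, self-contained alternative: the count $\binom{n-i}{k-1}$ of size-$k$ subsets with minimum $a_i$ is right, the alternating sum does collapse to $(1-1)^{n-i}$, and the negation trick $\max(-A) = -\min A$, $\min(-B) = -\max B$ correctly transfers the first identity to the second. What your approach buys is transparency about \emph{why} only the maximum survives (every non-maximal element's coefficient is a full alternating binomial sum, hence zero), at the cost of a little bookkeeping; the textbook inductive proof is shorter to write but less illuminating. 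One small caveat worth a sentence if this were to replace the citation: in the paper's actual application (Theorem~\ref{piDefAlt}) the identity is applied to an indexed family of specific-information values that need not be distinct, so you should either note that the identity extends verbatim to multisets (sum over index subsets, with ties broken arbitrarily in the ``minimum attained at $a_i$'' bookkeeping, or by a perturbation argument) or phrase the count as ``subsets whose minimum index is $i$'' after fixing any total order refining the value order.
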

\begin{proof}
It is proven in a number of introductory texts, e.g. \cite{Ross2009}.
\end{proof}

\begin{widetext}
\begin{mythm} \label{piDefAlt}
$\Pi_{{\bf R}}$ can be stated in closed form as
\begin{equation*}
\Pi_{{\bf R}}(S; \alpha) = I_{\min}(S; \alpha) -  \sum_{s} p(s) \max_{\beta \in \alpha^-} \min_{{\bf B} \in \beta} I(S=s; {\bf B}). \tag{A7}
\end{equation*}
\end{mythm}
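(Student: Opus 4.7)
The plan is to start from the closed-form expression for $\Pi_{{\bf R}}$ given in Theorem \ref{piCF1} and massage the alternating sum into a single maximum using the maximum-minimums identity (Lemma \ref{MaxMinsIdent}). That is, it suffices to show
\begin{equation*}
\sum_{k=1}^{|\alpha^-|} (-1)^{k-1} \sum_{\substack{\mathcal{B} \subseteq \alpha^-\\|\mathcal{B}|=k}} I_{\min}(S;\bigwedge\mathcal{B}) = \sum_s p(s) \max_{\beta \in \alpha^-} \min_{{\bf B} \in \beta} I(S=s;{\bf B}),
\end{equation*}
and then subtract both sides from $I_{\min}(S;\alpha)$.

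First I would unfold $I_{\min}(S;\bigwedge\mathcal{B}) = \sum_s p(s)\,\min_{{\bf C}\in\bigwedge\mathcal{B}} I(S=s;{\bf C})$ and pull the sum over $s$ outside, so that it is enough to establish, for each fixed $s$, the identity
\begin{equation*}
\sum_{k=1}^{|\alpha^-|} (-1)^{k-1} \sum_{\substack{\mathcal{B}\subseteq\alpha^-\\|\mathcal{B}|=k}} \min_{{\bf C}\in\bigwedge\mathcal{B}} I(S=s;{\bf C}) \;=\; \max_{\beta\in\alpha^-}\min_{{\bf B}\in\beta} I(S=s;{\bf B}).
\end{equation*}
The next step is the key simplification: rewrite $\min_{{\bf C}\in\bigwedge\mathcal{B}} I(S=s;{\bf C})$ as $\min_{\beta\in\mathcal{B}}\min_{{\bf B}\in\beta} I(S=s;{\bf B})$. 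Using the meet formula from Appendix C, $\bigwedge\mathcal{B} = \underline{\bigcup_{\beta\in\mathcal{B}} \beta}$, i.e.\ the antichain of minimal (with respect to $\subseteq$) elements of the union. By Lemma \ref{SpecInfoIncr}, $I(S=s;\cdot)$ is monotone in $\subseteq$, so the minimum of $I(S=s;{\bf C})$ over the union $\bigcup_{\beta\in\mathcal{B}}\beta$ is always attained on a minimal element; hence the minimum over $\bigwedge\mathcal{B}$ agrees with the minimum over the full union, which factors as the double minimum above.

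Having made this reduction, set $m_\beta := \min_{{\bf B}\in\beta} I(S=s;{\bf B})$ for each $\beta\in\alpha^-$. The desired identity becomes
\begin{equation*}
\sum_{k=1}^{|\alpha^-|} (-1)^{k-1} \sum_{\substack{\mathcal{B}\subseteq\alpha^-\\|\mathcal{B}|=k}} \min_{\beta\in\mathcal{B}} m_\beta \;=\; \max_{\beta\in\alpha^-} m_\beta,
\end{equation*}
which is exactly the maximum-minimums identity (Lemma \ref{MaxMinsIdent}) applied to the finite set $\{m_\beta : \beta\in\alpha^-\}$. Reinserting the weighted sum $\sum_s p(s)(\cdot)$ and subtracting from $I_{\min}(S;\alpha)$ yields the claimed closed form.

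The only real obstacle is the step identifying $\min_{{\bf C}\in\bigwedge\mathcal{B}} I(S=s;{\bf C})$ with $\min_{\beta\in\mathcal{B}} m_\beta$; everything else is bookkeeping and a direct appeal to the max-min identity. This step hinges on two ingredients both already in the paper: the explicit description of the meet in $\langle\mathcal{A}({\bf R}),\preccurlyeq\rangle$ as the antichain of set-theoretic minimal elements of the union (Eq.~\eqref{RLmeet}), together with the monotonicity of specific information on $\langle\mathcal{P}({\bf R}),\subseteq\rangle$ (Lemma \ref{SpecInfoIncr}) that guarantees the minimum is achieved on those minimal elements.
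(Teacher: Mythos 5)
Your proposal is correct and follows essentially the same route as the paper's own proof: start from Theorem \ref{piCF1}, pull $\sum_s p(s)$ outside, use the meet formula (Eq.~\eqref{RLmeet}) together with the monotonicity of specific information (Lemma \ref{SpecInfoIncr}) to replace $\min_{{\bf B}\in\bigwedge\mathcal{B}}$ by $\min_{\beta\in\mathcal{B}}\min_{{\bf B}\in\beta}$, and finish with the maximum-minimums identity. Your explicit justification of why the minimum over the antichain of $\subseteq$-minimal elements agrees with the minimum over the full union is a welcome elaboration of a step the paper cites only tersely.
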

\begin{proof}
\begin{align*}
\intertext{Combining Eqs. \eqref{piCF1Eq} and \eqref{IminDef} yields}
\Pi_{{\bf R}}(S; \alpha) & = I_{\min}(S; \alpha) - \sum_{k=1}^{|\alpha^-|} (-1)^{k-1} \sum_{\substack{\mathcal{B} \subseteq \alpha^- \\|\mathcal{B}|=k}} \sum_{s} p(s) \min_{{\bf B} \in \bigwedge \mathcal{B}} I(S=s; {\bf B})\\
& = I_{\min}(S; \alpha) - \sum_{s} p(s) \sum_{k=1}^{|\alpha^-|} (-1)^{k-1} \sum_{\substack{\mathcal{B} \subseteq \alpha^- \\|\mathcal{B}|=k}} \min_{{\bf B} \in \bigwedge \mathcal{B}} I(S=s; {\bf B})
\intertext{and by Lemma \ref{SpecInfoIncr} and Eq.~(\ref{RLmeet}),}
& = I_{\min}(S; \alpha) - \sum_{s} p(s) \sum_{k=1}^{|\alpha^-|} (-1)^{k-1} \sum_{\substack{\mathcal{B} \subseteq \alpha^- \\|\mathcal{B}|=k}} \min_{\beta \in \mathcal{B}} \min_{{\bf B} \in \beta} I(S=s; {\bf B}).
\intertext{Then, applying Lemma \ref{MaxMinsIdent} we have}
& = I_{\min}(S; \alpha) - \sum_{s} p(s) \max_{\beta \in \alpha^-} \min_{{\bf B} \in \beta} I(S=s; {\bf B}).
\end{align*}
\end{proof}

\begin{mythm}\label{piNonNeg}
$\Pi_{{\bf R}}$ is nonnegative.
\end{mythm}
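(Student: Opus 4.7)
The plan is to use the closed-form expression for $\Pi_{{\bf R}}$ from Theorem \ref{piDefAlt} and reduce the claim to a pointwise inequality in $s$ that follows from Lemma \ref{SpecInfoIncr}, essentially giving a pointwise strengthening of the monotonicity argument already used in Theorem \ref{IminIncr}.

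Concretely, using Theorem \ref{piDefAlt} together with the definition of $I_{\min}$, I would rewrite
\begin{equation*}
\Pi_{{\bf R}}(S;\alpha) = \sum_{s} p(s)\Bigl[\min_{{\bf A}\in\alpha} I(S=s;{\bf A}) \;-\; \max_{\beta\in\alpha^-}\min_{{\bf B}\in\beta} I(S=s;{\bf B})\Bigr].
\end{equation*}
Since each $p(s)\geq 0$, it suffices to show that for every outcome $s$ and every $\beta\in\alpha^-$,
\begin{equation*}
\min_{{\bf B}\in\beta} I(S=s;{\bf B}) \;\leq\; \min_{{\bf A}\in\alpha} I(S=s;{\bf A}).
\end{equation*}

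To establish this pointwise inequality, I would invoke the definition of $\preccurlyeq$: since $\beta\in\alpha^-$ we have $\beta\prec\alpha$, so for every ${\bf A}\in\alpha$ there exists ${\bf B}_{\bf A}\in\beta$ with ${\bf B}_{\bf A}\subseteq{\bf A}$. By Lemma \ref{SpecInfoIncr}, $I(S=s;{\bf B}_{\bf A})\leq I(S=s;{\bf A})$, and in particular $\min_{{\bf B}\in\beta} I(S=s;{\bf B}) \leq I(S=s;{\bf B}_{\bf A}) \leq I(S=s;{\bf A})$. Taking the minimum over ${\bf A}\in\alpha$ on the right and the maximum over $\beta\in\alpha^-$ on the left then yields the required bound, whence $\Pi_{{\bf R}}(S;\alpha)\geq 0$.

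I do not expect any real obstacle: the only subtle point is confirming that Theorem \ref{piDefAlt} does give exactly the form above and that $\beta\in\alpha^-$ indeed implies $\beta\prec\alpha$ in the sense needed (which is immediate from the covering definition). The hardest ingredient, monotonicity of specific information under set inclusion, has already been done in Lemma \ref{SpecInfoIncr}, so the work here is simply to lift that pointwise monotonicity through the nested $\max$/$\min$ and the $\sum_s p(s)$ average.
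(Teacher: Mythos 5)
Your proof is correct and matches the paper's argument in all essentials: both rest on the closed form of Theorem \ref{piDefAlt} and lift the monotonicity of specific information (Lemma \ref{SpecInfoIncr}) through the covering relation $\beta \prec \alpha$ to compare $\min_{{\bf B}\in\beta}$ with $\min_{{\bf A}\in\alpha}$ pointwise in $s$; the paper merely phrases this same inequality contrapositively as a proof by contradiction, whereas you state it directly. The one boundary case worth adding is $\alpha = \bot$, where $\alpha^-$ is empty and your pointwise inequality is vacuous, so nonnegativity instead follows from $\Pi_{{\bf R}}(S;\bot) = I_{\min}(S;\bot) \geq 0$, exactly as the paper handles it in a separate first step.
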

\begin{proof}
If $\alpha = \bot$, $\Pi_{{\bf R}}(S; \alpha) = I_{\min}(S; \alpha)$ and $\Pi_{{\bf R}}(S; \alpha) \geq 0$ follows from the nonnegativity of $I_{\min}$.   To prove it for $\alpha \neq \bot$, we proceed by contradiction.  Assume there exists $\alpha \in \mathcal{A}({\bf R}) \setminus \{\bot\}$ such that $\Pi_{{\bf R}}(S; \alpha) < 0$.  Applying Eq. \eqref{IminDef} to Theorem \ref{piDefAlt} and combining summations yields
\begin{equation*}
\Pi_{{\bf R}}(S; \alpha) = \sum_{s} p(s) \{ \min_{{\bf A} \in \alpha} I(S=s; {\bf A}) -  \max_{\beta \in \alpha^-} \min_{{\bf B} \in \beta} I(S=s; {\bf B}) \}.
\end{equation*}
From this equation, it is clear that there must exist $\beta \in \alpha^-$ such that for all ${\bf B} \in \beta$, $I(S=s; {\bf A}) < I(S=s; {\bf B})$ for some outcome $s \in S$ and some ${\bf A} \in \alpha$.  Thus, from Lemma \ref{SpecInfoIncr}, there does not exist ${\bf B} \in \beta$ such that ${\bf B} \subseteq {\bf A}$.  However, since $\beta \prec \alpha$ by definition, there exists ${\bf B} \in \beta$ such that ${\bf B} \subseteq {\bf A}$.
\end{proof}

\newpage

\section{Supplementary Figures}
\begin{figure*}[h!]
  \centering
    \includegraphics[width=.9\textwidth]{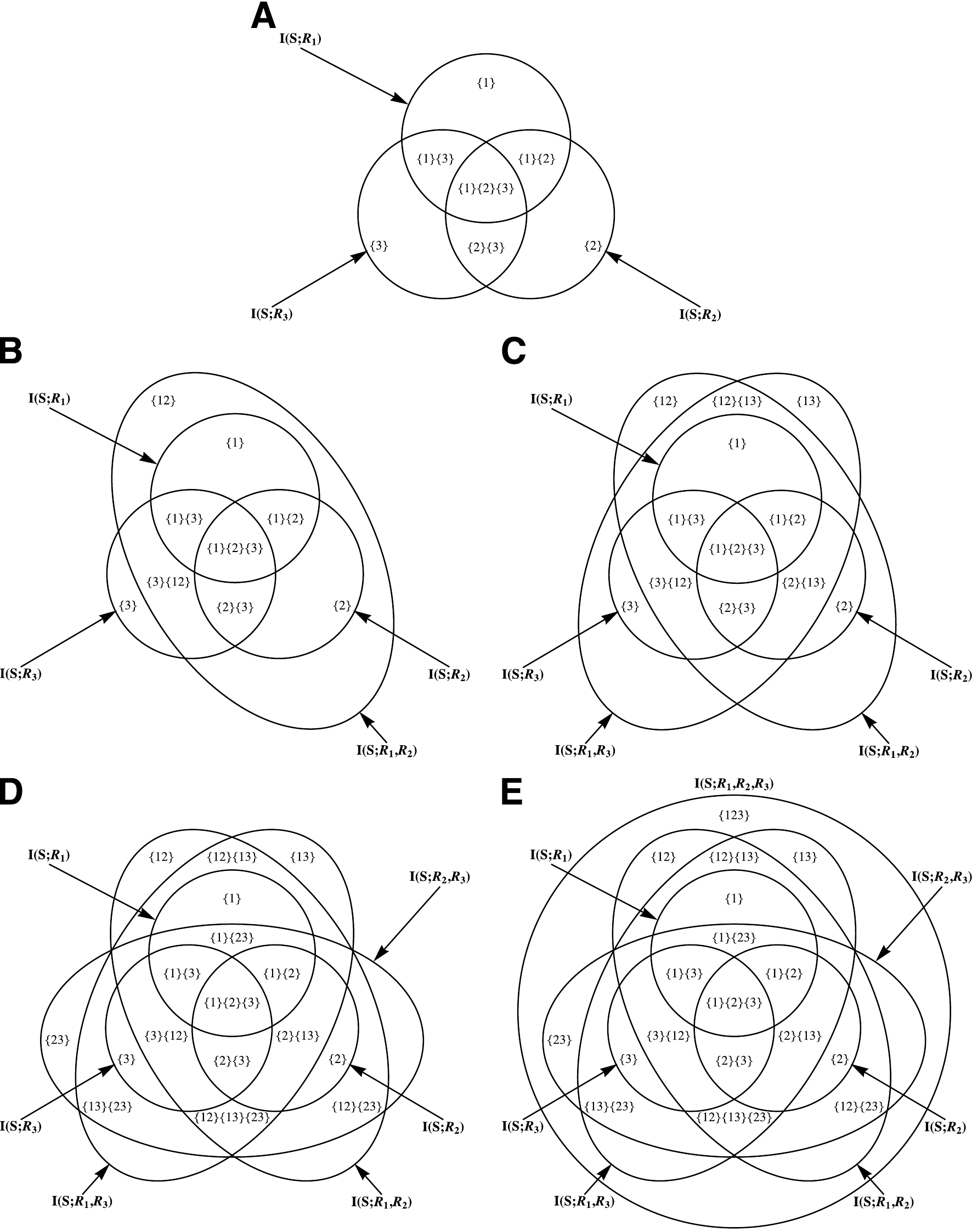}
   \begin{flushleft}FIG. S2: Constructing a PI-diagram for 4 variables. (A) For each element $R_i \in {\bf R}$  there is a region corresponding to $I(S; R_i)$.  (B-E) For each subset ${\bf A}$ of ${\bf R}$ with two or more elements, $I(S; {\bf A})$ is depicted as a region containing $I(S; A)$ for all $A \in {\bf A}$ but not coextensive with $\bigcup_{A \in {\bf A}} I(S; A)$.  Regions of the diagram intersect generically, representing all possibilities for redundancy.\end{flushleft}
\end{figure*}

\begin{figure*}[p!]
  \centering
    \includegraphics[width=0.65\textwidth]{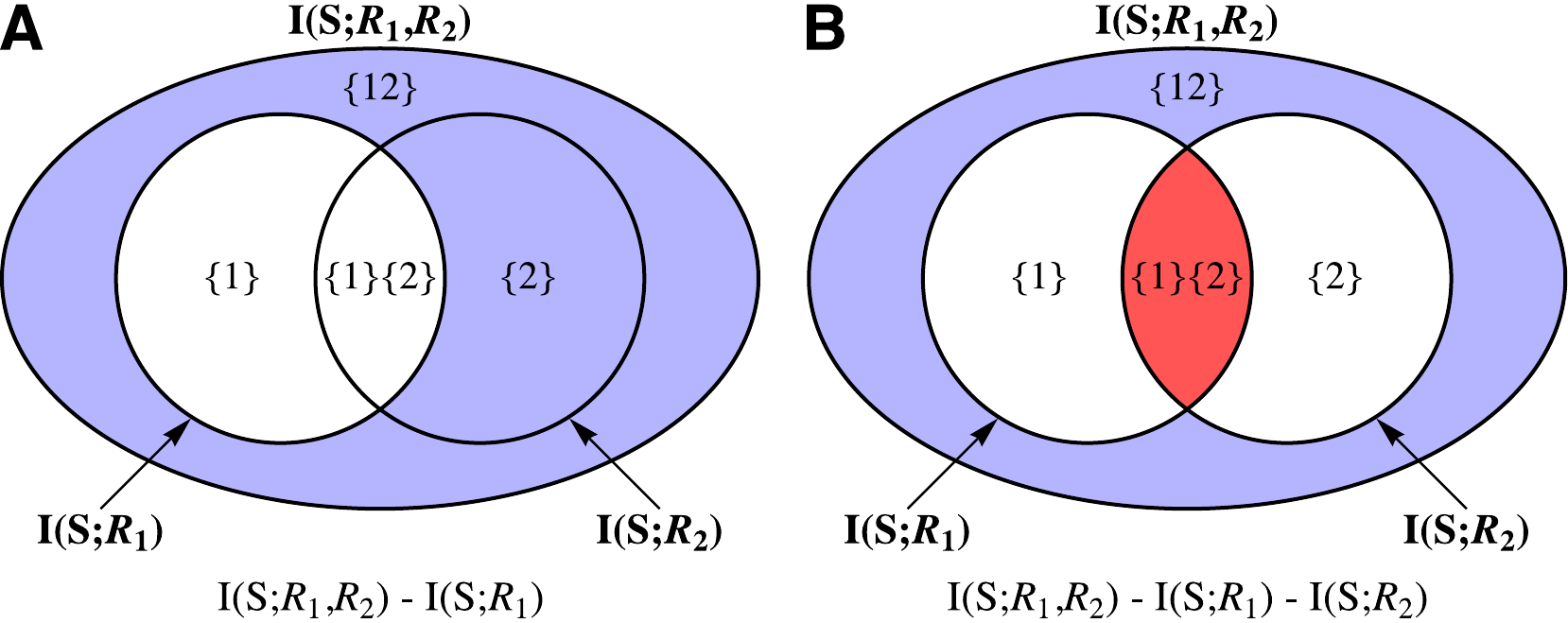}
       \begin{flushleft}FIG. S3: Computing the PI-decomposition for 3-variable interaction information. (A-B) Term-by-term calculation of $I(S;R_1;R_2) = I(S;R_1,R_2) - I(S;R_1) - I(S;R_2)$. Blue and red regions represent PI-terms that are added and subtracted, respectively.\end{flushleft}
\end{figure*}

\begin{figure*}[p!]
  \centering
    \includegraphics[width=0.82\textwidth]{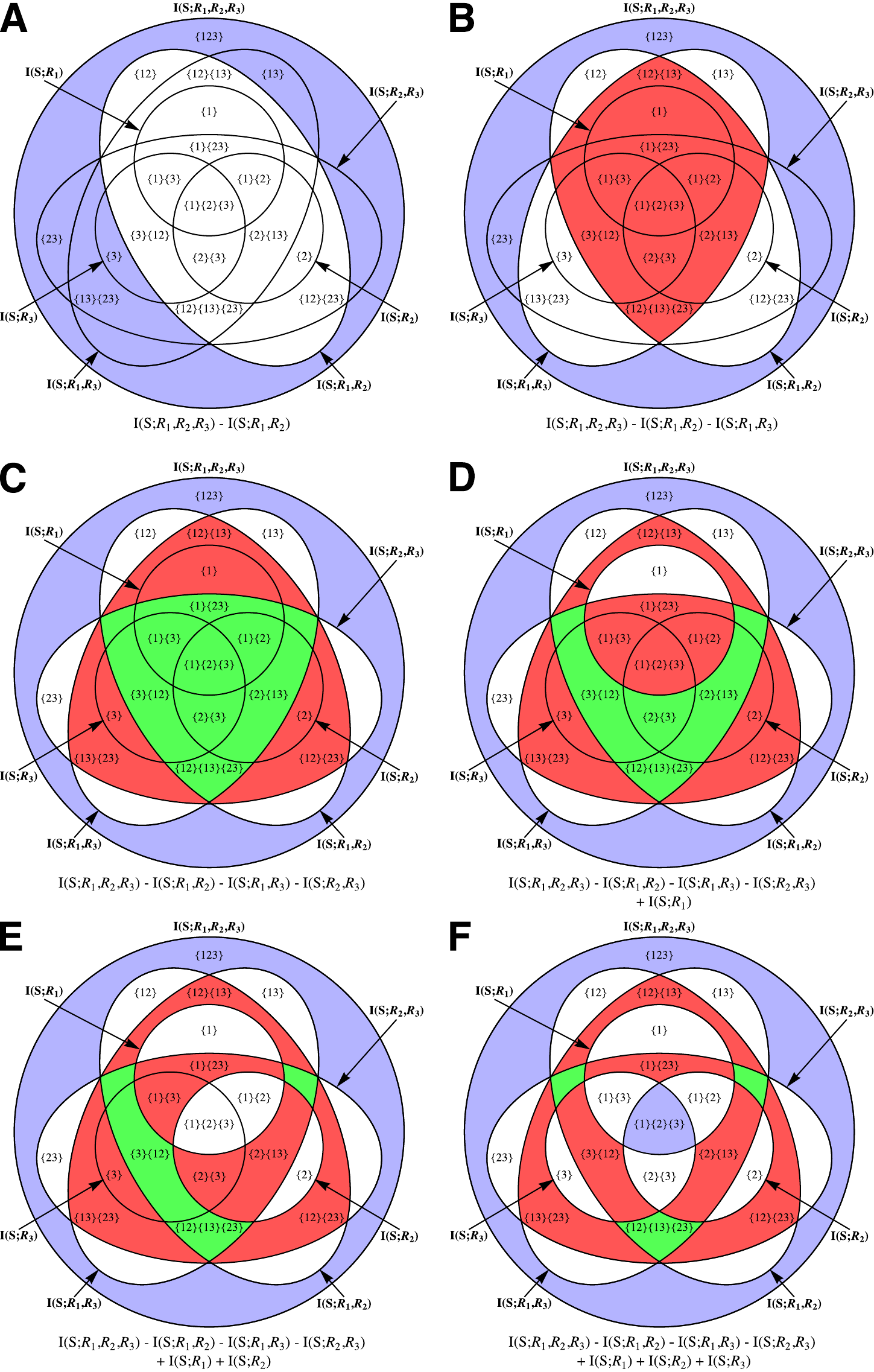}
           \begin{flushleft}FIG. S4: Computing the PI-decomposition for 4-variable interaction information.  (A-F) Term-by-term calculation of $I(S;R_1;R_2;R_3) = I(S;R_1,R_2,R_3) - I(S;R_1,R_2) - I(S;R_1,R_3) - I(S;R_2,R_3) + I(S;R_1) + I(S;R_2) + I(S;R_3)$.  Blue and red regions represent PI-terms that are added and subtracted, respectively.  Green regions represent PI-terms that are subtracted twice.\end{flushleft}
\end{figure*}

\end{widetext}

\end{document}